\theoremstyle{plain}
\newtheorem{theorem}{Theorem}[section]
\newtheorem{proposition}[theorem]{Proposition}
\newtheorem{lemma}[theorem]{Lemma}
\theoremstyle{definition}
\newcommand{\R}{\mathbb{R}}
\DeclareMathOperator{\sign}{sign}
\begin{document}

\title{A proof of unlimited multistability for phosphorylation cycles  }
\author{Elisenda Feliu$^1$, Alan D. Rendall$^2$ and Carsten Wiuf$^3$}
\date{\today}

\footnotetext[1]{Department of Mathematical Sciences, University of Copenhagen, Universitetsparken 5, 2100 Copenhagen, Denmark. efeliu@math.ku.dk}
\footnotetext[2]{Institut für Mathematik, Johannes Gutenberg-Universität Mainz, Staudingerweg 9,
D-55099 Mainz, Germany. rendall@uni-mainz.de}
\footnotetext[3]{Department of Mathematical Sciences, University of Copenhagen, Universitetsparken 5, 2100 Copenhagen, Denmark. wiuf@math.ku.dk}

\maketitle

\begin{abstract}
The multiple futile cycle is a phosphorylation system in which a molecular substrate might be phosphorylated sequentially $n$ times by means of an enzymatic mechanism. The system has been studied mathematically using reaction network theory and ordinary differential equations. It is known that the system might have at least as many as $2\lfloor \tfrac{n}{2}\rfloor+1$ steady states (where $\lfloor x\rfloor$ is the integer part of $x$) for particular choices of  parameters. Furthermore, for the simple and dual futile cycles ($n=1,2$) the stability of the steady states has been determined in the sense that the only steady state of the simple futile cycle is globally stable, while there exist parameter values for which the dual futile cycle admits two asymptotically stable and one unstable steady state. For general $n$, evidence that the possible number of asymptotically stable steady states increases with $n$ has been given, which has led to the conjecture that parameter values can be chosen such that $\lfloor\tfrac{n}{2}\rfloor+1$ out of $2\lfloor\tfrac{n}{2}\rfloor+1$ steady states are asymptotically stable and the remaining steady states are unstable. 

We prove this conjecture here by first reducing the system to a smaller one, for which we find a choice of parameter values that give rise to a unique steady state with multiplicity $2\lfloor\tfrac{n}{2}\rfloor+1$. Using arguments from geometric singular perturbation theory, and a detailed analysis of the centre manifold of this steady state, we achieve the desired result.
\end{abstract}

\section{Introduction}

Post-translational modifications of proteins are ubiquitous in almost all molecular processes at the cellular level. Perhaps the most important post-translational modification process is that of phosphorylation, where no, one or several phosphate groups are attached to specific sites of a protein, thereby creating different protein phosphoforms.  It is estimated that more than one third  of all proteins in eukaryotes  are temporarily phosphorylated \cite{cohen}. Moreover, the number of phosphorylation sites varies greatly from protein to protein and, for example, exceeds $n=20$ in the case of the tumour suppressor protein p53 \cite{p53} and $n=80$ in the case of the tau protein, a microtubule stabiliser  \cite{tau}. However, the phosphorylation process appears in many different guises. One particularly widespread variant is that of distributive sequential phosphorylation where a protein is phosphorylated one site at a time and in  sequential order. Dephosphorylation is the reverse process and removes phosphate groups in the reverse order.

The standard model of distributive sequential phosphorylation consists of $6n$ molecular reactions
\begin{align*}
{\rm X}_{i-1} + {\rm E} &\ce{<=>} {\rm Y}_{1,i} \ce{->} {\rm X}_i + {\rm E}, & {\rm X}_i + {\rm F} &\ce{<=>} {\rm Y}_{2,i} \ce{->} {\rm X}_{i-1}+ {\rm F}, & i=1,\dots,n,
\end{align*}
where $n$ denotes the number of  phosphorylation sites, ${\rm X}_i$, $i=0,\ldots,n$, denotes the phosphoform with $i$ phosphate groups attached, ${\rm E}$ is an enzyme (a kinase) that catalyses the phosphorylation process, ${\rm F}$ is another enzyme (a phosphatase) that catalyses the dephosphorylation process, and ${\rm Y}_{1,i}$ and ${\rm Y}_{2,i}$, $i=1,\ldots,n$, are enzyme-substrate complexes. The model is also known as the multiple \textit{futile cycle} \cite{Wang:2008dc}.

Assuming mass-action kinetics \cite{erdi-toth}, the  reaction network  might be modelled by a system of  ordinary differential equations (ODEs) in the concentrations of the substances (the phosphoforms, enzyme-substrate complexes and free enzymes). It leads to a high-dimensional ODE system in $3n+3$ variables (concentrations) and $6n$ constants (one for each reaction). It has been argued, but not proved, that the number of possible \textit{stable} steady states of the system increases linearly with $n$, thereby allowing the cell a large amount of functional flexibility. The phenomenon has been named  \textit{unlimited multistability } \cite{TG-Nature,Wang:2008dc}. 

In the case $n=1$, the simple futile cycle, it is known that there is a unique  positive steady state for fixed total amounts of substrate (protein) and enzymes, and  that this solution is globally asymptotically stable \cite{angeli06}. In the  case $n=2$, the dual futile cycle, the conclusion was first reached with the help of  simulations that for certain choices of parameters there are three steady states, two of which are stable \cite{Markevich-mapk}. It was later proved rigorously   that  at most three steady states exist and that for certain parameter values there are indeed three \cite{Wang:2008dc}. In that paper nothing was proved about the stability of these  solutions. Later it was formally proven that for suitable choices of  parameters and total amounts of substrate and enzymes these three solutions are hyperbolic, two are asymptotically stable and the third is a saddle \cite{hell15a}. The  aim of this paper is to extend these results on stability to general values of  $n$.

For general $n$,  there can be more than two stable steady states \cite{G-PNAS}. Evidence that there can be as many as $2\lfloor\frac{n}{2}\rfloor+1$ steady states with $\lfloor\frac{n}{2}\rfloor+1$ of them being stable  was presented in 
\cite{TG-Nature}. Here $\lfloor x\rfloor$ denotes the floor function, the largest integer smaller than a real number $x$. In more detail, it was 
shown that steady states are in one to one correspondence with the intersections of two curves in the plane and these curves were plotted 
numerically. It was also indicated how in a certain formal limit these  intersections correspond to the roots of a polynomial in one variable. 
Starting from these considerations it has been  proved analytically  that there are parameters for which the number of steady 
states indicated in \cite{TG-Nature} exist but their stability was  not treated \cite{Wang:2008dc}. 

The number of steady states might be larger than $2\lfloor\frac{n}{2}\rfloor+1$ for certain choices of parameter values.  For $n=2$, the bound $2\lfloor\frac{n}{2}\rfloor+1$ is an upper bound, but for $n=3$, there can be as many as five steady states \cite{flockerzi14}.

We  prove that there exist parameter values for  which the ODE system has  $\lfloor\frac{n}{2}\rfloor+1$ stable and $\lfloor\frac{n}{2}\rfloor$ unstable
steady states (for fixed total amounts of substrate and enzymes).   The line of argument is the following. First we will  apply geometric singular perturbation theory (GSPT)  to reduce the ODE system in $3n+3$ variables to an ODE system in $n+1$ variables, the phosphoform concentrations. This system is known as the Michaelis-Menten limit (or system) of the original system \cite{hell15a}. For the Michaelis-Menten system we will show that there exists a choice of parameter values for which there is one asymptotically stable steady state with multiplicity $2\lfloor\frac{n}{2}\rfloor+1$.  
To achieve this, we apply a combination of linear algebra and dynamical systems theory. The main hurdle is to establish asymptotic stability of the steady state  as it is not hyperbolic but has a centre manifold of dimension one (for fixed total amount of substrate). The leading non-zero coefficient of the system in the direction of the centre manifold is negative but depends on $n$. Having established this, we again apply perturbation theory to conclude the existence of parameter values for which there are $2\lfloor\frac{n}{2}\rfloor+1$ steady states of which $\lfloor\frac{n}{2}\rfloor+1$ are asymptotically stable and 
$\lfloor\frac{n}{2}\rfloor$ are unstable. With this in place, we finally lift the steady states to the original system while preserving their stability properties.

\section{The basic equations}
We consider a protein ${\rm X}$ and let ${\rm X}_i$ be the protein with $i=0,\ldots,n$
phosphate groups attached to it. The kinase which catalyses the phosphorylation of ${\rm X}$ is 
denoted by ${\rm E}$ and the phosphatase which catalyses the dephosphorylation of ${\rm X}$ by ${\rm F}$.  The phosphorylation and dephosphorylation of the protein proceeds through the formation of enzyme-substrate complexes. For $i=1,\dots,n$, we let ${\rm Y}_{1,i}$ denote the enzyme-substrate complex formed by ${\rm X}_{i-1}$ and ${\rm E}$, and similarly by  ${\rm Y}_{2,i}$ the enzyme-substrate complex formed by ${\rm X}_i$ and ${\rm F}$. 
The reaction mechanism under consideration consists of  the reactions
\begin{align*}
{\rm X}_{i-1} + {\rm E} &\ce{<=>[a_{1,i}][d_{1,i}]} {\rm Y}_{1,i} \ce{->[k_{1,i}]} {\rm X}_i + {\rm E}, & {\rm X}_i + {\rm F} &\ce{<=>[a_{2,i}][d_{2,i}]} {\rm Y}_{2,i} \ce{->[k_{2,i}]} {\rm X}_{i-1} + {\rm F}, & i=1,\dots,n,
\end{align*}
where the labels of the reactions indicate the reaction rate constants. These are positive real numbers.
Let  $x_i$ be the concentration of ${\rm X}_i$ for $i=0,\ldots,n$, $y_{1,i}$ and $y_{2,i}$ the 
concentrations of ${\rm Y}_{1,i}$ and ${\rm Y}_{2,i}$, respectively, 
for $i=1,\ldots,n$, and  let $x_{\rm E}$ and $x_{\rm F}$ be the concentrations of ${\rm E}$ and  ${\rm F}$, respectively. Under the assumption of mass-action kinetics, the evolution equations become 
\begin{align}
 \frac{dx_i}{dt}&=-a_{1,i+1}x_i x_{\rm E} -a_{2,i}x_ix_{\rm F}+d_{1,i+1}y_{1,i+1}+d_{2,i}y_{2,i}  +k_{1,i}y_{1,i}+k_{2,i+1}y_{2,i+1},& i&=0,\dots,n, \label{ev1}\\
 \frac{dy_{1,i}}{dt} & =a_{1,i}x_{i-1} x_{\rm E}-(d_{1,i}+k_{1,i})y_{1,i},& i&=1,\dots,n,\label{ev2}\\
 \frac{dy_{2,i}}{dt} & =a_{2,i}x_i x_{\rm F}-(d_{2,i}+k_{2,i})y_{2,i},& i&=1,\dots,n, \label{ev3}\\
 \frac{dx_{\rm E}}{dt} & =\sum_{i=1}^{n} -a_{1,i}x_{i-1} x_{\rm E} + (d_{1,i}+k_{1,i})y_{1,i}, \label{ev4}\\
 \frac{dx_{\rm F}}{dt} & =\sum_{i=1}^n -a_{2,i}x_i x_{\rm F} + (d_{2,i}+k_{2,i})y_{2,i},
\label{ev5}
\end{align}
where $a_{2,0}=d_{2,0}=k_{1,0}=0$, $a_{1,n+1}=d_{1,n+1}=k_{2,n+1}=0$.
The 
quantities
\begin{align}
E_{\rm tot} & =x_{\rm E}+\sum_{i=1}^{n} y_{1,i}\label{etotal},\\
F_{\rm tot} & =x_{\rm F}+\sum_{i=1}^n y_{2,i},\label{ftotal}\\
X_{\rm tot}& =\sum_{i=0}^n x_i+\sum_{i=1}^{n} (y_{1,i}+ y_{2,i}) \label{xtotal}
\end{align}
are conserved. Here $E_{\rm tot}$, $F_{\rm tot}$ and $X_{\rm tot}$ are known as the total amounts of kinase, phosphatase and substrate, respectively.  Note that using the equation for the conserved quantities $E_{\rm tot}$ and $F_{\rm tot}$, we might eliminate the variables $x_{\rm E}$ and  $x_{\rm F}$ from the right hand side of equations (\ref{ev1})-(\ref{ev5})  and thus the evolution equations for these two variables can be discarded. The relation (\ref{xtotal}) might be used to eliminate a further variable but this will not be done
here since it would destroy the symmetry of the system.

We next introduce rescaled variables that  generalise the rescaling used in the case of the dual futile cycle \cite{hell15a}. For a parameter 
$\epsilon>0$, let $x_{\rm E}=\epsilon\widetilde x_{\rm E}$, 
$x_{\rm F}=\epsilon\widetilde x_{\rm F}$, $y_{j,i}=\epsilon\widetilde y_{j,i}$ and 
$\tau=\epsilon t$. Denote the derivative with respect to $\tau$ by a prime 
and drop the tildes for convenience. This leads to the equations
\begin{align}
 \frac{dx_i}{d\tau}&=-a_{1,i+1}x_i x_{\rm E} -a_{2,i}x_ix_{\rm F}+d_{1,i+1}y_{1,i+1}+d_{2,i}y_{2,i}  +k_{1,i}y_{1,i}+k_{2,i+1}y_{2,i+1},\quad i=0,\dots,n, \label{ev21} \\
\epsilon \frac{dy_{1,i}}{d\tau} & =a_{1,i}x_{i-1} x_{\rm E}-(d_{1,i}+k_{1,i})y_{1,i},\qquad i=1,\dots,n,  \label{ev22}\\
\epsilon \frac{dy_{2,i}}{d\tau} & =a_{2,i}x_i x_{\rm F}-(d_{2,i}+k_{2,i})y_{2,i},\qquad i=1,\dots,n,  \label{ev23}\\
\epsilon \frac{dx_{\rm E}}{d\tau} & =\sum_{i=1}^{n} -a_{1,i}x_{i-1} x_{\rm E} + (d_{1,i}+k_{1,i})y_{1,i}, \label{ev24} \\
 \epsilon\frac{dx_{\rm F}}{d\tau} & =\sum_{i=1}^n -a_{2,i}x_i x_{\rm F} + (d_{2,i}+k_{2,i})y_{2,i}.  \label{ev25}
\end{align}
Correspondingly, we get rescaled conserved quantities $E_{\rm tot}$ and $F_{\rm tot}$ identical to \eqref{etotal} and \eqref{ftotal} (after cancelling $\epsilon$),  and the conserved quantity
\begin{equation*}\label{xtotalepsilon}
X_{\rm tot}=\sum_{i=0}^n x_i+\epsilon\sum_{i=1}^{n} (y_{1,i}+ y_{2,i}).
\end{equation*}

The expressions in the rescaled variables  remain regular as $\epsilon$ tends 
 to zero, in the sense that the functions on the right
hand sides of the equations extend in a $\mathcal{C}^\infty$-manner to $\epsilon=0$, 
and in the limit some of the evolution equations reduce to 
algebraic equations.  This is an example of the standard situation considered in GSPT. (For an introduction to GSPT see \cite{kuehn15}[Chapter 1].) In general we have a system of equations of the
form 
\begin{align*}
x'&=f(x,y,\epsilon), \\
\epsilon y'&=g(x,y,\epsilon). 
\end{align*}
Here the prime denotes the derivative with respect to $\tau$ and $\epsilon$ is 
a parameter. Introducing a new time variable $t=\tau/\epsilon$ and denoting the 
derivative with respect to $t$ by a dot, leads to the \textit{extended system}
\begin{align}
\dot x &=\epsilon f(x,y,\epsilon), \nonumber\\ 
\dot y&=g(x,y,\epsilon),\label{gsptextended2}\\
\dot \epsilon &=0. \nonumber
\end{align}
The variable $x$ is generally referred to as the \textit{slow} variable and $y$ as the \textit{fast} variable.

Assume that the equation $g(x,y,0)=0$ is equivalent to $y=h_0(x)$ for a 
continuously differentiable function $h_0$ so that the zero set of $g$ is a manifold. Then for $\epsilon=0$ the two equations for $x$ and $y$ in \eqref{gsptextended2} are equivalent to the system  $x'=f(x,h_0(x),0)$, which we refer to as the \emph{limiting system}. Of particular importance are the eigenvalues of the derivative of $g$ with  respect to $y$. In this context these  are known  as \emph{transverse eigenvalues}. 

The 
central conclusion is that when none of the transverse eigenvalues has zero 
real part, there exists an invariant manifold for the extended system called
the \emph{slow manifold} with the following properties:
\begin{itemize}
\item Its restriction to $\epsilon=0$ is the zero set of $g$. 
\item The restriction of the extended system to the invariant manifold is a system 
which, when written in terms of the time variable $\tau$, depends in a regular
manner on $\epsilon$ and agrees with the limiting system for $\epsilon=0$. 
\end{itemize}
The meaning of the word `regular' here is not only that the right hand sides of
the equations are functions of the parameter which for $\epsilon=0$ are as 
differentiable as the original system but also
that the equations can be solved for the time derivatives.

Consider now the evolution equations \eqref{ev21}-\eqref{ev25} for the multiple futile cycle with
the variables $x_{\rm E}$ and $x_{\rm F}$ eliminated using the rescaled  conservation equations, the analogues of \eqref{etotal} and \eqref{ftotal}. Then the 
variables $x_i$ might be taken as the slow variables $x$ in the general set-up and the variables $y_{j,i}$ as the fast variables $y$. 
Here, the equation $g(x,y,0)=0$ is equivalent to setting the right hand side of \eqref{ev2} and \eqref{ev3} to zero and eliminating  $x_{\rm E}$ and $x_{\rm F}$ using the conservation equations. 
By setting \eqref{ev2} and \eqref{ev3} to zero we obtain
\begin{equation*}\label{eq:y}
  y_{1,i} =  K_{1,i}^{-1} x_{i-1} x_{\rm E} ,\qquad y_{2,i} =K_{2,i}^{-1} x_i x_{\rm F}, 
  \end{equation*}
where 
\begin{equation}\label{eq:KJI}
K_{j,i}=\tfrac{d_{j,i}+k_{j,i}}{a_{j,i}},\quad\text{ for}\quad j=1,2, \quad i=1,\dots,n.
\end{equation}
 Using the conserved quantities $E_{\rm tot}$ and $F_{\rm tot}$, we further have by insertion
$$E_{\rm tot} = x_{\rm E} \left( 1 + \sum_{\ell=1}^n K_{1,\ell}^{-1} x_{\ell-1} \right), \qquad  F_{\rm tot} = x_{\rm F} \left( 1 + \sum_{\ell=1}^n K_{2,\ell}^{-1} x_{\ell} \right),  $$
which leads to
\begin{equation}\label{eq:y2}
  y_{1,i} = \frac{ K_{1,i}^{-1} E_{\rm tot} x_{i-1}}{1 + \sum_{\ell=1}^n K_{1,\ell}^{-1} x_{\ell-1}},\qquad y_{2,i} =\frac{K_{2,i}^{-1} F_{\rm tot} x_i}{1 + \sum_{\ell=1}^n K_{2,\ell}^{-1} x_{\ell}}
 \end{equation}
 for $i=1,\ldots,n$.
These expressions define the function $h_0$ above. 

The transverse  eigenvalues are by definition the eigenvalues of the 
linearisation $L$ of the right hand side of the evolution equations for the 
concentrations of the substrate-enzyme complexes with respect to those 
complexes. That is, the linearisation of the equations for $i=1,\dots,n$,
\begin{align*}
& a_{1,i}x_{i-1} (E_{\rm tot} -\sum_{\ell=1}^{n} y_{1,\ell}) -(d_{1,i}+k_{1,i})y_{1,i}  \\
& a_{2,i}x_i (F_{\rm tot} -\sum_{\ell=1}^{n} y_{2,\ell})-(d_{2,i}+k_{2,i})y_{2,i}.
\end{align*}
The matrix $L$ can be seen as the sum of a diagonal matrix $L_1$ and another matrix 
$L_2$. The diagonal elements of $L_1$ are $-(d_{1,i}+k_{1,i})$ for the rows corresponding to $y_{1,i}$, $i=1,\ldots, n$, and $-(d_{2,i}+k_{2,i})$ for the rows corresponding to $y_{2,i}$, $i=1,\ldots, n$. The matrix $L_2$ is 
block diagonal with one block corresponding to each of the two enzymes. We 
only need to consider one of these blocks as the other one can be treated in a strictly analogous manner. Consider the block corresponding to the enzyme ${\rm E}$.  All entries of the $i$-th row of the block  are equal to $-a_{1,i}x_{i-1}$. The transverse eigenvalues have negative real
part because all eigenvalues of $-L$ have positive real part, as shown in the lemma below.

\begin{lemma}\label{lemma:Pmatrix}  
If $M=(m_{ij})$ is a matrix with elements of the form $m_{ij}=a_{ij}+b_i$
where $a_{ij}=0$ for all $i\ne j$ and all $a_{ii}$ and $b_i$ are positive,
then all eigenvalues of $M$ have positive real parts.
\end{lemma}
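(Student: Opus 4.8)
The plan is to exploit the very rigid structure forced by the hypotheses. Since $a_{ij}=0$ for $i\neq j$, every off-diagonal entry in row $i$ equals $b_i$, while the diagonal entry is $a_{ii}+b_i$. Writing $b=(b_1,\dots,b_n)^\top$ and letting $\mathbf 1$ be the all-ones column vector, this says precisely that $M=D+b\,\mathbf 1^\top$ with $D=\diag(a_{11},\dots,a_{nn})$ and $a_{ii}>0$. So $M$ is a rank-one perturbation of a positive diagonal matrix, and I would study its spectrum through the associated secular equation rather than, say, through Gershgorin disks: the Gershgorin disk of row $i$ is centred at $a_{ii}+b_i$ with radius $(n-1)b_i$, which for $n\ge 3$ protrudes into the left half-plane, so a direct localisation argument is not available.

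Concretely, I would argue by contradiction and suppose $\lambda$ is an eigenvalue with $\operatorname{Re}\lambda\le 0$. Then $\operatorname{Re}(a_{ii}-\lambda)=a_{ii}-\operatorname{Re}\lambda\ge a_{ii}>0$ for every $i$, so in particular $a_{ii}-\lambda\neq 0$ and $D-\lambda I$ is invertible. The matrix determinant lemma then yields $\det(M-\lambda I)=\det(D-\lambda I)\bigl(1+\mathbf 1^\top(D-\lambda I)^{-1}b\bigr)=\prod_i(a_{ii}-\lambda)\cdot\bigl(1+\sum_i \tfrac{b_i}{a_{ii}-\lambda}\bigr)$. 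Because the first factor $\prod_i(a_{ii}-\lambda)$ is nonzero under our assumption on $\operatorname{Re}\lambda$, the vanishing of the left-hand side forces the secular factor $1+\sum_i \tfrac{b_i}{a_{ii}-\lambda}$ to be zero.

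The key estimate is then elementary: for any complex $z$ with $\operatorname{Re}z>0$ one has $\operatorname{Re}(1/z)=\operatorname{Re}z/|z|^2>0$. Applying this with $z=a_{ii}-\lambda$ and using $b_i>0$ gives $\operatorname{Re}\bigl(1+\sum_i\tfrac{b_i}{a_{ii}-\lambda}\bigr)=1+\sum_i b_i\,\operatorname{Re}\tfrac{1}{a_{ii}-\lambda}>1>0$, so this factor cannot vanish, a contradiction. Hence no eigenvalue can have nonpositive real part, and all eigenvalues of $M$ have positive real part, as claimed. I do not expect a genuine obstacle here: the only points needing care are the initial identification of the rank-one-plus-diagonal structure and the bookkeeping that $\operatorname{Re}\lambda\le 0$ together with $a_{ii}>0$ keeps $D-\lambda I$ invertible (so that the determinant lemma applies and the product can only vanish through its secular factor). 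Everything else reduces to the one-line sign computation for $\operatorname{Re}(1/z)$, which is what makes the positivity of the $b_i$ do its work.
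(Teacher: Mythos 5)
Your proof is correct, and it takes a genuinely different route from the one in the paper. The paper proves the lemma by showing that the matrix $A_n=D+b\,\mathbf 1^{\top}$ is both a P-matrix (via explicit row and column operations reducing each principal submatrix to upper-triangular form with positive diagonal) and sign-symmetric (via a fairly involved case analysis of the minors $\det((A_n)_{I,J})$ and $\det((A_n)_{J,I})$), and then invokes a theorem of Carlson stating that sign-symmetric P-matrices are positively stable. You instead exploit the rank-one structure directly: the matrix determinant lemma converts the characteristic equation into the secular equation $1+\sum_i b_i/(a_{ii}-\lambda)=0$, and the elementary observation that $\operatorname{Re}(1/z)>0$ whenever $\operatorname{Re}z>0$ rules out any root with $\operatorname{Re}\lambda\le 0$. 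Your argument is shorter, entirely self-contained (no external stability criterion needed), and the invertibility of $D-\lambda I$ under the contradiction hypothesis is handled correctly, so the determinant lemma applies and the vanishing must come from the secular factor. What the paper's heavier route buys is the P-matrix and sign-symmetry structure itself, which is of independent interest for such matrices, but since the lemma is only used to establish that the transverse eigenvalues have negative real part, your proof would serve as a drop-in replacement. Your remark that Gershgorin disks are insufficient here for $n\ge 3$ is also accurate and shows you identified the right obstacle to the naive approach.
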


The proof  is given in Subsection~\ref{sec:Pmatrix}.

\medskip
We conclude that GSPT applies to this situation and that there exists a slow manifold. The restriction of the extended system to the slow manifold gives a  family of dynamical systems depending regularly on the parameter $\epsilon$. The level sets of the conserved quantity defined by $X_{\rm tot}$ are transverse to the slow manifold and so restricting to a constant value of  this conserved quantity also results in a regular parameter-dependent  dynamical system, which we call the \emph{completely reduced system}. Note that  the dimension of the restriction of the extended system to the slow manifold is  greater by one than that of the completely reduced system.

At this point we need some material from the theory of dynamical systems.
Background on this can be found in \cite{perko}. Let $\dot x=f(x)$ be a system
of ODEs on $\R^n$, where $f$ is $C^1$. Let $x_0$ be a steady state of this 
system, that is, a point where $f(x_0)=0$. Let $A=Df(x_0)$ denote the 
derivative (Jacobian) of the right hand side of the equations at $x_0$. Then $\R^n$ can be written as the direct sum of three linear
subspaces $E_-$, $E_c$ and $E_+$, which are spanned by the real and imaginary 
parts of the generalised eigenvectors of $A$ corresponding to the eigenvalues whose
real parts are negative, zero and positive respectively. These subspaces are 
called the stable, centre and unstable subspaces at $x_0$. If there are no 
eigenvalues with zero real part, so that $E_c$ is trivial, then the point $x_0$
is said to be hyperbolic. 
Suppose now that we have a parameter-dependent 
system $\dot x=f(x,\alpha)$ where $f(x_0,0)=0$ and $f$ is $\mathcal{C}^1$ in its 
dependence on both $x$ and $\alpha$. The point $x_0$ is a steady 
state of the system for $\alpha=0$. If this steady state is hyperbolic, then it
follows that for $\alpha$ close to zero there exists a unique steady state
close to $x_0$ and that it is hyperbolic. The dimensions of the stable and 
unstable subspaces of this steady state are independent of $\alpha$ for 
$\alpha$ small.
 
Returning to our concrete example, if the limiting system has $k$ hyperbolic 
steady states, then so does the system defined by applying all three conservation laws to the original system for $\epsilon$ small (playing the role of $\alpha$ here). The stability type of these steady states is preserved, in the sense that the dimensions of the stable and 
unstable manifolds are independent of $\epsilon$. In particular, if for one of the steady states of the limiting system, the stable subspace is the whole of $\R^n$, then this is
also true for the corresponding steady state with $\epsilon>0$. Moreover, this is known to imply that the steady state is asymptotically stable. In general a steady state of the restriction of the system to the slow manifold is a steady state of the extended system and the dimension of its stable manifold in the extended system is the sum of the dimension of its stable manifold in the extended system  restricted to the slow manifold and the number of transverse eigenvalues with  negative real part.
 
Using the fact that the right hand sides of \eqref{ev22} and \eqref{ev23} are zero for $\epsilon=0$ and the expressions in \eqref{eq:y2}, it follows that  
the limiting system, which in this case will also be referred to as the Michaelis-Menten  (or MM) system, consists of the equations 
\begin{align*}
\frac{dx_i}{d\tau} & =-\frac{k_{1,i+1}K_{1,i+1}^{-1}E_{\rm tot}x_i }{1 + \sum_{\ell=1}^n K_{1,\ell}^{-1} x_{\ell-1}}
+
\frac{k_{1,i}K_{1,i}^{-1}E_{\rm tot}x_{i-1}}
{1 + \sum_{\ell=1}^n K_{1,\ell}^{-1} x_{\ell-1}} \\ & \qquad \qquad
-\frac{k_{2,i}K_{2,i}^{-1}F_{\rm tot}x_i}{1+\sum_{\ell=1}^{n} K_{2,\ell}^{-1}x_\ell} 
+\frac{k_{2,i+1}K_{2,i+1}^{-1}F_{\rm tot}x_{i+1}}
{1+\sum_{\ell=1}^{n} K_{2,\ell}^{-1}x_\ell}
\end{align*}
for $i=0,\ldots, n$, where we adopt  the convention that the symbols $K_{2,0}^{-1}$, $K_{1,0}^{-1}$, $K_{1,n+1}^{-1}$ and $K_{2,n+1}^{-1}$ are defined to be zero. 

The rest of the paper is devoted to proving the following theorems.

\begin{theorem}\label{thm:MM}
There exists a choice of positive parameters (reaction rate constants and $X_{\rm tot}$) such that the Michaelis-Menten system admits  $2\lfloor\frac{n}{2}\rfloor+1$ steady states in the linear invariant subspace defined by the total amount $X_{\rm tot}$. Furthermore, relatively to this invariant subspace, $\lfloor\frac{n}{2}\rfloor+1$ of these steady states are asymptotically stable and hyperbolic, and $\lfloor\frac{n}{2}\rfloor$ are unstable and hyperbolic with a one-dimensional  unstable manifold.
\end{theorem}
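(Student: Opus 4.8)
The plan is to turn the steady-state problem into a univariate one, construct a single highly degenerate steady state, establish its (non-hyperbolic) asymptotic stability by a centre-manifold computation, and finally unfold it by a small parameter perturbation. For the first step I would exploit the chain structure of the MM system: writing the right-hand side of the $x_i$-equation as $J_i-J_{i+1}$, where $J_i$ is the net flux across the edge joining ${\rm X}_{i-1}$ and ${\rm X}_i$ and $J_0=J_{n+1}=0$, the telescoping forces $J_i=0$ at every steady state. This flux-balance condition yields a recursion $x_i=\mu^i Q_i\,x_0$, where $\mu$ is the (steady-state constant) ratio of the two denominators appearing in the MM equations and $Q_i$ is a product of rate-constant ratios. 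Reinserting this into the defining relation for $\mu$ gives an equation that is linear in $x_0$, so $x_0$ is determined as a function of $\mu$; imposing $\sum_i x_i=X_{\rm tot}$ then identifies the steady states in the invariant subspace with the admissible positive roots (those with $x_0>0$) of a single polynomial $G(\mu)$ of degree $n+1$. This is the algebraic form of the two-curve intersection picture of \cite{TG-Nature}.

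Next I would choose the parameters and $X_{\rm tot}$ so that $G$ has one positive root $\mu_\ast$ of multiplicity exactly $m:=2\lfloor\frac n2\rfloor+1$, with any further root forced out of the admissible range. Since $m=n+1$ for $n$ even and $m=n$ for $n$ odd, this amounts to matching the coefficients of $G$ against those of $c(\mu-\mu_\ast)^m$ (times one extra linear factor when $n$ is odd), a solvable polynomial system in the parameters whose positivity has to be verified. This produces a unique steady state of multiplicity $m$.

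The crux is its stability. Working in the $n$-dimensional level set $\{\sum_i x_i=X_{\rm tot}\}$, I would first show the Jacobian has one zero eigenvalue and $n-1$ eigenvalues with negative real part: the zero eigenvalue is forced because the multiple root makes $X_{\rm tot}(\mu)$ critical at $\mu_\ast$, so the steady-state curve is tangent to the level set and supplies the centre direction, while the transverse block can be controlled by the sign structure already used in Lemma~\ref{lemma:Pmatrix}. The genuinely hard step, which I expect to be the main obstacle, is the one-dimensional centre-manifold reduction: one must compute the leading non-vanishing coefficient of the reduced equation $\dot w=a\,w^p+\cdots$ and prove $a<0$. Because the steady state has multiplicity $m$, the reduced field vanishes to order $p=m$, which is odd, so $a<0$ yields asymptotic stability within the level set; establishing this sign uniformly in $n$ is delicate, as $a$ is built from the higher $\mu$-derivatives of $G$ at $\mu_\ast$ and genuinely depends on $n$.

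Finally I would unfold the degeneracy by perturbing one parameter (for example $X_{\rm tot}$), splitting $\mu_\ast$ into $m$ simple admissible roots. On the persisting one-dimensional centre manifold the dynamics becomes $\dot w=a\prod_{j=1}^m(w-w_j)+\cdots$ with $a<0$; since the $w_j$ are simple and $m$ is odd, the stability alternates, giving $\lceil m/2\rceil=\lfloor\frac n2\rfloor+1$ sinks and $\lfloor m/2\rfloor=\lfloor\frac n2\rfloor$ sources on the centre manifold. Together with the $n-1$ stable transverse directions, each sink is hyperbolic and asymptotically stable and each source is hyperbolic with a one-dimensional unstable manifold, which is exactly the assertion of the theorem.
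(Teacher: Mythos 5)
Your outline follows the same route as the paper---reduction to a univariate steady-state polynomial, an $m$-fold positive root at a single degenerate steady state, a centre-manifold sign computation, and an unfolding---but one of its steps fails concretely and two others are only named, not supplied. The concrete failure is the unfolding: perturbing the single parameter $X_{\rm tot}$ does \emph{not} split the $m$-fold root into $m$ simple positive roots. With the normalisation of \eqref{eq:thepoly} and $X_{\rm tot}=(n+1)\lambda$, the steady-state polynomial for $n$ even becomes $(1-\lambda)(u^{n+1}-1)+\lambda(u-1)^{n+1}$; writing $u=1+v$ this is $(1-\lambda)(n+1)v+\lambda v^{n+1}+\cdots$, whose real roots near $0$ are $v=0$ together with at most two solutions of $v^{n}\approx -(1-\lambda)(n+1)/\lambda$. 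So for $n\ge 4$ you recover at most three steady states, not $2\lfloor \tfrac{n}{2}\rfloor+1$ (the case $n$ odd fails the same way). A generic one-parameter deformation of an $m$-fold root never yields $m$ simple real roots; one has to move through the full coefficient space, e.g.\ by prescribing the factorisation $\left(u-1\right)\prod_{\ell=1}^{k}\big(u-(1+\ell\mu)\big)\big(u-(1+\ell\mu)^{-1}\big)$ and solving back for $\alpha_i(\mu),\beta_i(\mu)$, which is what the paper does. Relatedly, your formula $\dot w=a\prod_{j=1}^m(w-w_j)+\cdots$ for the reduced dynamics assumes exactly what must be proved, namely that the perturbed steady states are nondegenerate zeros of the reduced vector field; simplicity of the roots of $G$ does not immediately give this, and the paper derives it by a Rolle-type contradiction argument resting on $q^{(M)}(0,0)\neq 0$.

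The two hardest analytic inputs are also missing. First, the $n-1$ negative eigenvalues of the Jacobian at the degenerate steady state do not follow from Lemma~\ref{lemma:Pmatrix}: that lemma treats the transverse eigenvalues of the fast subsystem (a negative diagonal matrix plus a matrix with constant rows), whereas the Jacobian at $p$ is, after the parameter choice of Proposition~\ref{prop:good}, a symmetric rank-one-type perturbation of a discrete Neumann Laplacian, and the paper needs Descartes' rule of signs together with sign control of \emph{all} principal minors (Proposition~\ref{eq:eigenvalues}) to conclude. (Your observation that the zero eigenvalue comes from tangency of the steady-state curve to the level set is correct and matches the kernel vector $v_i=n-2i$.) Second, the sign and order of the leading centre-manifold coefficient---which you rightly identify as the crux---is the bulk of the paper's work (Theorem~\ref{thm:taylor} and Lemmas~\ref{lem:gammaalpha}--\ref{lemma:gamma}); the suggestion that it can be read off from higher $\mu$-derivatives of $G$ at $\mu_\ast$ is a heuristic, not an argument. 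The architecture of your proof is the right one, but these are the places where it would currently break down.
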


As a consequence, GSPT allows us to conclude that the original system with evolution equations  \eqref{ev1}-\eqref{ev5} also admits a choice of positive parameters such that  there are $\lfloor\frac{n}{2}\rfloor+1$ asymptotically stable steady states and $\lfloor\frac{n}{2}\rfloor$ unstable steady states. Indeed, given that the transverse  eigenvalues are all negative this means that for $\epsilon$ small there exist steady states of the original system corresponding to the steady states of the MM system. They are hyperbolic and have corresponding stability properties. The
sinks remain sinks and the dimension of the unstable manifolds of the other  steady states remains one. This completes the proof of the theorem below.

\begin{theorem}\label{thm:full}
There exists a choice of positive parameters (reaction rate constants and total amounts) such that the  $n$-site phosphorylation system with evolution equations \eqref{ev1}-\eqref{ev5}  admits  $2\lfloor\frac{n}{2}\rfloor+1$ steady states in the linear invariant subspace defined by the total amounts $E_{\rm tot}$, $F_{\rm tot}$ and $X_{\rm tot}$. Furthermore, relatively to this invariant subspace, $\lfloor\frac{n}{2}\rfloor+1$ of these steady states are asymptotically stable and hyperbolic, and $\lfloor\frac{n}{2}\rfloor$ are unstable and hyperbolic with a one-dimensional  unstable manifold.
\end{theorem}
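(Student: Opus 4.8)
The plan is to derive Theorem~\ref{thm:full} from Theorem~\ref{thm:MM} purely by the geometric singular perturbation machinery assembled above, with $\epsilon$ playing the role of the perturbation parameter $\alpha$ from the dynamical-systems background. First I would fix the positive rate constants and the value of $X_{\rm tot}$ furnished by Theorem~\ref{thm:MM}. For these parameters the Michaelis--Menten system is exactly the limiting system $x'=f(x,h_0(x),0)$ of the extended system \eqref{gsptextended2} attached to the rescaled equations \eqref{ev21}-\eqref{ev25}, with $h_0$ given by \eqref{eq:y2}; on the fixed $X_{\rm tot}$ level set it has $2\lfloor n/2\rfloor+1$ hyperbolic steady states, of which $\lfloor n/2\rfloor+1$ are sinks and $\lfloor n/2\rfloor$ are saddles with one-dimensional unstable manifold. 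Crucially, Lemma~\ref{lemma:Pmatrix} shows that all $2n$ transverse eigenvalues have negative real part, so none lies on the imaginary axis and GSPT applies.

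Next I would invoke the slow-manifold conclusion stated above: there is an invariant slow manifold for all small $\epsilon\ge 0$ restricting at $\epsilon=0$ to the graph $y=h_0(x)$, and the completely reduced system---the restriction of the extended system to the slow manifold intersected with a fixed $X_{\rm tot}$ level set---is a $C^1$ family depending regularly on $\epsilon$ that agrees with the MM system at $\epsilon=0$. Because each MM steady state is hyperbolic, the persistence of hyperbolic equilibria under $C^1$ perturbation (recorded in the background paragraph) then gives, for every sufficiently small $\epsilon>0$, a unique nearby steady state of the completely reduced system, again hyperbolic and with unchanged stable- and unstable-manifold dimensions. This reproduces $2\lfloor n/2\rfloor+1$ steady states with the same sink/saddle split on the slow manifold.

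To upgrade this to the stability classification in the full system I would use the additivity of stable-manifold dimensions stated above: the stable manifold of such an equilibrium inside the extended system has dimension equal to its stable-manifold dimension within the slow manifold plus the number of transverse eigenvalues with negative real part. Since all $2n$ transverse eigenvalues are negative, the fast directions are purely contracting: the $\lfloor n/2\rfloor+1$ sinks have stable subspace equal to the entire phase space relative to the three conserved quantities and hence stay asymptotically stable, whereas for the $\lfloor n/2\rfloor$ saddles the fast directions enter only the stable manifold, so the unstable manifold stays one-dimensional. Finally, for each fixed $\epsilon>0$ the rescaling $x_{\rm E}=\epsilon\widetilde x_{\rm E}$, $x_{\rm F}=\epsilon\widetilde x_{\rm F}$, $y_{j,i}=\epsilon\widetilde y_{j,i}$, $\tau=\epsilon t$ is a diffeomorphism composed with a positive time change; it preserves equilibria and the signs of the Jacobian eigenvalues, so the whole classification transports back to the original equations \eqref{ev1}-\eqref{ev5} on the subspace cut out by $E_{\rm tot}$, $F_{\rm tot}$ and $X_{\rm tot}$.

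I expect the main obstacle to be the dimension bookkeeping of the third step rather than any hard analysis: one must make sure that the property actually inherited is asymptotic stability \emph{relative to the invariant subspace} defined by the conservation laws, keeping careful track of which directions are eliminated by $E_{\rm tot}$, $F_{\rm tot}$ and $X_{\rm tot}$ as opposed to the transverse fast directions, so that the sink and saddle counts come out as $\lfloor n/2\rfloor+1$ and $\lfloor n/2\rfloor$ exactly and are not shifted by the codimension of the slow manifold or of the conserved-quantity level sets.
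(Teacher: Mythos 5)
Your proposal is correct and follows essentially the same route as the paper: Theorem~\ref{thm:full} is deduced from Theorem~\ref{thm:MM} by GSPT, using the negativity of all transverse eigenvalues (Lemma~\ref{lemma:Pmatrix}) to guarantee persistence of the hyperbolic steady states on the slow manifold, with the stable-manifold dimension count showing that sinks remain sinks and saddles keep one-dimensional unstable manifolds. The only point the paper leaves implicit that you make explicit --- correctly --- is that the $\epsilon$-rescaling is a diffeomorphism plus positive time change and therefore transports the classification back to \eqref{ev1}--\eqref{ev5}.
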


\section{Proof of Theorem~\ref{thm:MM}}

This section is devoted to selecting `nice' parameters such that the MM system has  $2\lfloor\tfrac{n}{2}\rfloor+1$ positive steady states with the same $X_{\rm tot}$, of which  $\lfloor\frac{n}{2}\rfloor+1$ are asymptotically stable. To achieve this, we first select parameter values such that $(1,\dots,1)\in \R^{n+1}_{>0}$ is the only positive steady state of the MM system and has multiplicity $2\lfloor\tfrac{n}{2}\rfloor+1$. We then study the stability of this steady state.  To this end, we need some centre manifold theory and this will now be reviewed. Background on this can be found in \cite{carr81} and \cite{kuznetsov95}. Consider once again the system $\dot x=f(x)$ and a steady state $x_0$. Suppose that $f$ is of class $\mathcal{C}^k$ for some finite $k\ge 1$. We restrict consideration to a small neighbourhood of $x_0$. There exist manifolds $V_-$, $V_c$ and $V_+$ of class $\mathcal{C}^k$ passing through $x_0$ that are invariant under the flow  generated by the differential equations and are tangent to $E_-$, $E_c$ and  $E_+$, respectively, at $x_0$. They are referred to as stable, centre and  unstable manifolds of the system at $x_0$. $V_-$ and $V_+$ are unique but  $V_c$ is in general not unique. Fortunately, this lack of uniqueness is  usually not a problem in applications as  will also be illustrated in our case (to be argued later). It may be noted in passing that the analogues of the statements just made are in general not true if $\mathcal{C}^k$ with $k$ finite is replaced everywhere by $\mathcal{C}^\infty$ in the sense that there might not exist a $\mathcal{C}^\infty$ centre manifold  \cite[Section 5.1]{kuznetsov95}.

We show that the Jacobian of the MM system evaluated at the steady state has $n-1$ eigenvalues with negative real part and two zero eigenvalues. Due to the conservation equation for $X_{\rm tot}$, this implies that in the system defined by restriction to a fixed value of $X_{\rm tot}$, this point has a one-dimensional centre manifold. We proceed to study this centre manifold and conclude that the steady state is asymptotically stable. Finally, we use a perturbation argument to ensure that modified parameters can be found for which there are $2\lfloor\frac{n}{2}\rfloor+1$ positive steady states, of which  $\lfloor\frac{n}{2}\rfloor+1$ are asymptotically stable and the remaining ones unstable.

In the following we assume $n\ge 2$, and note that the case $n=1$ has already been solved \cite{angeli06}. The outline of this section is as follows:
\begin{itemize}
\item[\S \ref{sec:ss-multi}] We find parameter values such that the limiting system has one steady state. Furthermore, we give the eigenvalue structure of the Jacobian evaluated at this steady state.
\item[\S \ref{sec:centre}] The centre manifold is studied. We show that the ODE system of the limiting system is attractive towards the steady state along the direction of the centre manifold. Combined with the results of Subsection \ref{sec:ss-multi}, this establishes the asymptotic stability of the steady state.
\item[\S \ref{sec:conclude}] The parameter values are perturbed to establish the existence of $\lfloor\frac{n}{2}\rfloor+1$  asymptotically stable steady states of the MM system and these are then lifted to the original system. 
\end{itemize}

\subsection{A steady state with multiplicity $2\lfloor\tfrac{n}{2}\rfloor+1$}
\label{sec:ss-multi}

We let
$$X_{\rm tot}=n+1,\quad  F_{\rm tot}=E_{\rm tot}=1,\qquad k_{1,i}=K_{1,i},\quad k_{2,i}=K_{2,i},\ i=1,\dots,n.$$
 To simplify the notation, we define
$$ \alpha_i = K_{1,i}^{-1},  \qquad \beta_i = K_{2,i}^{-1},\quad i=1,\dots,n.$$
Given arbitrary positive values of $K_{j,i}$ one can  always find positive values of $d_{j,i}, a_{j,i}, k_{j,i}$ corresponding to $K_{j,i}$, see \eqref{eq:KJI}. We will therefore be concerned with choosing $K_{j,i}$. 

With these definitions the MM system becomes:
\begin{align}
\frac{dx_0}{d\tau} & =-\frac{x_0}{1+\sum_{\ell=1}^{n}  \alpha_\ell x_{\ell-1}}
+\frac{ x_{1}}{1+\sum_{\ell=1}^{n}  \beta_\ell x_\ell},  \nonumber \\
\frac{dx_i}{d\tau} & =\frac{x_{i-1} -x_i}{1+\sum_{\ell=1}^{n}  \alpha_\ell x_{\ell-1}}+ 
\frac{x_{i+1}-x_i}{1+\sum_{\ell=1}^{n}  \beta_\ell x_\ell}, \qquad i=1,\dots,n-1, \label{eq:MMgood}\\
\frac{dx_n}{d\tau} & =\frac{ x_{n-1}}
{1+\sum_{\ell=1}^{n}  \alpha_\ell x_{\ell-1}}-\frac{x_n}{1+\sum_{\ell=1}^{n}  \beta_\ell x_\ell} .\nonumber
\end{align}
Recall that the sum of these equations is zero. 

We start by reducing the steady state system to a polynomial equation in one variable. We equate  the right-hand side of \eqref{eq:MMgood} to zero. 
Let 
\begin{equation}\label{eq:u}
 u = \frac{1+\sum_{\ell=1}^{n}  \beta_\ell x_\ell}{1+\sum_{\ell=1}^{n}  \alpha_\ell x_{\ell-1}}.
 \end{equation}
By setting the first equation in \eqref{eq:MMgood} to zero, we obtain  $x_1 = u x_0$. Using this relation and $\frac{dx_1}{d\tau}=0$ we obtain $x_2= u\, x_1 = u^2 x_0$. By iteration we obtain
\begin{equation}\label{eq:xiu}
x_i = u^i x_0,\qquad i=0,\dots,n.
\end{equation}
The assumption on the conserved quantity, $X_{\rm tot}=n+1$, implies
\begin{equation}\label{eq:x0}
 n+1 = x_0( 1+u+\dots + u^n),\qquad \textrm{hence} \qquad x_0 = \frac{n+1}{1+u+\dots + u^n}.
\end{equation}
Using \eqref{eq:u} and \eqref{eq:xiu}  we find 
$$1+\sum_{\ell=1}^{n}  \beta_\ell u^\ell x_0 = u+\sum_{\ell=1}^{n} \alpha_\ell u^{\ell} x_0, $$
which after substitution of the value for $x_0$ in \eqref{eq:x0} and multiplication by $1+u+\dots + u^n$ gives
$$1+u+\dots + u^n+\sum_{\ell=1}^{n}  \beta_\ell u^\ell (n+1)= u\, (1+u+\dots + u^n)+\sum_{\ell=1}^{n} \alpha_\ell u^{\ell}(n+1).  $$
Finally, rearranging the terms results in the following  polynomial equation in one variable:
 \begin{equation}\label{eq:thepoly} u^{n+1}  + (n+1)(\alpha_{n} - \beta_n)  u^n+\ldots + (n+1)(\alpha_{i} - \beta_i)  u^i+ \ldots
+(n+1)(\alpha_{1} - \beta_1) u - 1 =0.
\end{equation}
We conclude that the positive steady states of the MM system are in one-to-one correspondence with the positive solutions of the univariate polynomial equation in \eqref{eq:thepoly}. Given a solution $u$ of this equation, then $x_0,\dots,x_n$ are found from \eqref{eq:x0} and \eqref{eq:xiu}.

\medskip
For any choice of values of $\alpha_i,\beta_i$ such that 
\begin{equation}\label{eq:nicealpha}
 (n+1)(\alpha_{i} - \beta_i)  =\left\{\begin{array}{cl} (-1)^{i+1} \binom{n+1}{i}  & \quad\text{for}\quad n\quad\text{even}\\
(-1)^{i+1}\left(\binom{n}{i}-\binom{n}{i-1} \right) & \quad\text{for}\quad n\quad\text{odd} \end{array} \right. \qquad i=1,\dots,n,
\end{equation}
the polynomial on the left-hand side of \eqref{eq:thepoly} is $(u-1)^{n+1}$ for $n$ even and $(u-1)^n (u+1)$ for $n$ odd.  
So the only positive root is $u=1$, which in turn implies that $x_0=1$, and  the steady state is $p=(1,\dots,1)$. 
This solution has multiplicity $2\lfloor\tfrac{n}{2}\rfloor+1$. Note that positive $\alpha_i,\beta_i$ can always be found such that \eqref{eq:nicealpha} is satisfied.

We move on to show that $\alpha_i,\beta_i$  additionally can be chosen  such that the Jacobian of the MM system evaluated at $p$ has rank $n-1$ (hence the zero eigenvalue has multiplicity two)  and the non-zero eigenvalues have negative real part. One zero eigenvalue corresponds to the conserved quantity $X_{\rm tot}$. If the non-zero eigenvalues have negative real part, then the stability of the steady state might be understood from studying the behaviour of the system along the direction corresponding to the second zero eigenvalue.  This is done in Section \ref{sec:centre}. 
 
For this purpose we  introduce a combinatorial quantity. Define
\begin{equation}\label{eq:gamman}
 \gamma_{n}(i):=\left\{\begin{array}{cl} \frac{(-1)^{i+1}}{n+1} \binom{n}{i} & \quad\text{for}\quad n\quad\text{even,}\\[5pt]
\frac{(-1)^{i+1}}{n+1} \left(    \binom{n-1}{i} - \binom{n-1}{i-1}  \right)  =
\frac{(-1)^{i+1}(n-2i)}{n(n+1)} \binom{n}{i}  & \quad\text{for}\quad n\quad\text{odd,} \end{array} \right. 
\end{equation}
and $i=0,\ldots,n$. Note that  $\gamma_n(0)=\gamma_n(n)=-\tfrac{1}{n+1}$ for all $n$. 
Let  $\gamma_n=(\gamma_n(0),\ldots,\gamma_n(n))$. 

Furthermore, we recall a standard binomial identity \cite{ruiz}:
\begin{equation}\label{eq:binom0}
0 =\sum_{j=0}^n (-1)^j P(j) \begin{pmatrix} n \\ j \end{pmatrix}\qquad \text{for any polynomial }P \textrm{ of degree smaller than }n. 
\end{equation}
In particular, we have 
\begin{align*}
0& =\sum_{j=0}^n(-1)^j \binom{n}{j} \qquad  \text{for}\quad  n>0,
\\
0 & =\sum_{j=0}^n(-1)^j (j+i) \binom{n}{j} =\sum_{j=1}^n(-1)^j j\binom{n}{j} \qquad  \text{for any }i\in \mathbb{Z}\quad \text{ and}\quad  n>1. 
\end{align*}
Additionally, we will apply Pascal's recurrence:
\begin{equation}\label{eq:binom4}
\binom{n}{i-1}   + \binom{n}{i} =\binom{n+1}{i},
\end{equation}
which is valid for non-negative integers, $i,n\ge 0$.

\begin{proposition}\label{prop:good}
For any real $\alpha_1$, define
$$ \beta_i := \alpha_1 - \gamma_n(i) -\tfrac{1}{n+1}, \quad i=1,\dots,n, \qquad \textrm{and}\qquad \alpha_i:=\beta_{i-1},\qquad  i=2,\dots,n.$$
Then \eqref{eq:nicealpha} is satisfied and further, 
$$ \beta_n  =\alpha_1, \quad \textrm{and}\quad 
1+\sum_{i=1}^{n} \alpha_i = 1+\sum_{i=1}^{n} \beta_i= \alpha_1 n.
$$
\end{proposition}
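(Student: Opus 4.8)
The plan is to reduce all three assertions to the single observation that the given definitions make $\alpha_i-\beta_i$ telescope in terms of $\gamma_n$. For $i=2,\dots,n$ the definition $\alpha_i=\beta_{i-1}$ combined with $\beta_i=\alpha_1-\gamma_n(i)-\tfrac{1}{n+1}$ yields $\alpha_i-\beta_i=\gamma_n(i)-\gamma_n(i-1)$, while for $i=1$ the same identity holds upon invoking $\gamma_n(0)=-\tfrac{1}{n+1}$, since $\alpha_1-\beta_1=\gamma_n(1)+\tfrac{1}{n+1}=\gamma_n(1)-\gamma_n(0)$. Hence in every case $(n+1)(\alpha_i-\beta_i)=(n+1)\bigl(\gamma_n(i)-\gamma_n(i-1)\bigr)$, and checking \eqref{eq:nicealpha} reduces to a purely combinatorial computation on $\gamma_n$.

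I would carry out that check by parity. When $n$ is even, $\gamma_n(i)=\tfrac{(-1)^{i+1}}{n+1}\binom{n}{i}$, so $(n+1)\bigl(\gamma_n(i)-\gamma_n(i-1)\bigr)=(-1)^{i+1}\bigl(\binom{n}{i}+\binom{n}{i-1}\bigr)=(-1)^{i+1}\binom{n+1}{i}$ by Pascal's recurrence \eqref{eq:binom4}, matching the even branch of \eqref{eq:nicealpha}. When $n$ is odd, using $\gamma_n(i)=\tfrac{(-1)^{i+1}}{n+1}\bigl(\binom{n-1}{i}-\binom{n-1}{i-1}\bigr)$, the difference collapses to $(-1)^{i+1}\bigl(\binom{n-1}{i}-\binom{n-1}{i-2}\bigr)$, and two applications of \eqref{eq:binom4} rewrite this as $(-1)^{i+1}\bigl(\binom{n}{i}-\binom{n}{i-1}\bigr)$, the odd branch. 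The relation $\beta_n=\alpha_1$ is then immediate, since $\beta_n=\alpha_1-\gamma_n(n)-\tfrac{1}{n+1}$ and $\gamma_n(n)=-\tfrac{1}{n+1}$.

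For the sum identities I would evaluate $\sum_{i=0}^n\gamma_n(i)$ and show it vanishes via \eqref{eq:binom0}: in the even case it is $-\tfrac{1}{n+1}\sum_{i=0}^n(-1)^i\binom{n}{i}=0$, and in the odd case, writing $\gamma_n(i)=\tfrac{(-1)^{i+1}(n-2i)}{n(n+1)}\binom{n}{i}$, it is a combination of $\sum_{i=0}^n(-1)^i\binom{n}{i}$ and $\sum_{i=0}^n(-1)^i i\binom{n}{i}$, both of which vanish for $n>1$ by \eqref{eq:binom0} applied to polynomials of degree $0$ and $1$ (note $n$ odd with $n\ge 2$ forces $n\ge 3$). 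Since $\gamma_n(0)=-\tfrac{1}{n+1}$, this gives $\sum_{i=1}^n\gamma_n(i)=\tfrac{1}{n+1}$. Substituting into $\sum_{i=1}^n\beta_i=n\alpha_1-\sum_{i=1}^n\gamma_n(i)-\tfrac{n}{n+1}$ yields $n\alpha_1-1$, hence $1+\sum_{i=1}^n\beta_i=\alpha_1 n$. Finally, the shift $\alpha_i=\beta_{i-1}$ for $i\ge 2$ gives $\sum_{i=1}^n\alpha_i=\alpha_1+\sum_{j=1}^{n-1}\beta_j=\alpha_1+\bigl(\sum_{j=1}^n\beta_j-\beta_n\bigr)$, and $\beta_n=\alpha_1$ cancels to leave $\sum_{i=1}^n\alpha_i=\sum_{j=1}^n\beta_j$, which completes the last claim.

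I expect the only mildly delicate point to be the odd-parity verification of \eqref{eq:nicealpha}, where one must reconcile the two equivalent expressions for $\gamma_n(i)$ and apply Pascal's recurrence twice to identify $\binom{n-1}{i}-\binom{n-1}{i-2}$ with $\binom{n}{i}-\binom{n}{i-1}$; every other step is routine bookkeeping once the telescoping identity $\alpha_i-\beta_i=\gamma_n(i)-\gamma_n(i-1)$ and the vanishing of $\sum_{i=0}^n\gamma_n(i)$ are in place.
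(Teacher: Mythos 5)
Your proposal is correct and follows essentially the same route as the paper: the telescoping identity $\alpha_i-\beta_i=\gamma_n(i)-\gamma_n(i-1)$ (with $\gamma_n(0)=-\tfrac{1}{n+1}$ covering $i=1$), Pascal's recurrence once or twice by parity for \eqref{eq:nicealpha}, and the alternating binomial identity \eqref{eq:binom0} for the sum. The only cosmetic difference is that you evaluate $\sum_{i=0}^n\gamma_n(i)=0$ while the paper evaluates $(n+1)\sum_{i=1}^{n-1}\gamma_n(i)=2$; these are equivalent since $\gamma_n(0)=\gamma_n(n)=-\tfrac{1}{n+1}$.
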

\begin{proof}
By definition
$$ \beta_n = \alpha_1  - (-\tfrac{1}{n+1}) - \tfrac{1}{n+1} = \alpha_1.$$
For $i=1,\ldots,n$ the relation 
$\alpha_i=\alpha_1 - \gamma_n(i-1) -\tfrac{1}{n+1}$  holds since $\alpha_i=\beta_{i-1}$ for $i\geq 2$ and  $\gamma_n(0)=\frac{-1}{n+1}$. This gives
$$
\alpha_{i} - \beta_i =   - \gamma_n(i-1) +\gamma_n(i).
$$
For $n$ even we have
\begin{align*}
\alpha_{i} - \beta_i =  \frac{(-1)^{i+1}}{n+1} \begin{pmatrix} n \\  i-1 \end{pmatrix}+   \frac{(-1)^{i+1}}{n+1} \begin{pmatrix} n \\  i \end{pmatrix} =   \frac{(-1)^{i+1}}{n+1}\begin{pmatrix} n+1 \\  i \end{pmatrix},
\end{align*}
where  \eqref{eq:binom4} is applied. If $n$ is odd, then using the expression of $\gamma_n(i)$ \eqref{eq:gamman}, we obtain: 
\begin{align*}
\alpha_{i} - \beta_i =  \frac{(-1)^{i+1}}{n+1}   \left(\begin{pmatrix} n-1 \\  i-1 \end{pmatrix}-\begin{pmatrix} n-1 \\ i-2\end{pmatrix} + \begin{pmatrix} n-1 \\  i \end{pmatrix}-\begin{pmatrix} n-1 \\ i-1\end{pmatrix}\right)    =  \frac{(-1)^{i+1}}{n+1}\left(\begin{pmatrix} n \\ i \end{pmatrix}-\begin{pmatrix} n \\ i-1\end{pmatrix}\right)
\end{align*}
where  \eqref{eq:binom4} is used twice.
This concludes the proof of \eqref{eq:nicealpha}.

It remains to show that $1+\sum_{i=1}^{n} \alpha_i = 1+\sum_{i=1}^{n} \beta_i= \alpha_1 n$. The first equality is a consequence of definition of $\alpha_i$ and $\alpha_1=\beta_n$.  For the second equality, we have
\begin{align*}
 1+\sum_{i=1}^{n} \beta_i &= 1+\alpha_1 +\sum_{i=1}^{n-1}  \left(\alpha_1 - \gamma_n(i) -\frac{1}{n+1}\right)   = 1+ n\, \alpha_1 -  \frac{n-1}{n+1}  - \sum_{i=1}^{n-1}   \gamma_n(i). 
\end{align*}
If  $ (n+1)\sum_{i=1}^{n-1}   \gamma_n(i)=2$ then we are done. To show this, apply \eqref{eq:binom0}  with $P(j)= n-2j$ for $n$ odd and $P(j)=n$ for $n$ even (recalling that $n>1$):
\begin{align*} 
(n+1)\sum_{i=1}^{n-1}   \gamma_n(i) &=\tfrac{-1}{n(n+1)}
\sum_{i=1}^{n-1}  (-1)^i P(i) \binom{n}{i}= \tfrac{1}{n(n+1)}(  P(0) + (-1)^{n} P(n)) \\
&=\begin{cases}
\tfrac{2n}{n(n+1)} = \tfrac{2}{n+1} & \textrm{for $n$ even}, \\
\tfrac{ n - (-n)}{n(n+1)}  = \tfrac{2}{n+1} & \textrm{ for $n$ odd}.
\end{cases}
\end{align*}
This concludes the proof.
\end{proof}

\medskip
According to  Proposition~\ref{prop:good} and the discussion after \eqref{eq:nicealpha}, by choosing $\alpha_1>0$ large enough such that $\beta_i> 0$ for all $i=1,\dots,n$, then 
 $p=(1,\dots,1)$ is a steady state of the MM system. In the remaining part of the text, we consider $\alpha_i,\beta_i$ chosen such that this is the case.
 We proceed to study the Jacobian matrix  $J\in \R^{(n+1)\times (n+1)}$ of the MM system \eqref{eq:MMgood} evaluated at $p$, with this choice of parameters. 
 
We obtain the following partial derivatives of the MM system \eqref{eq:MMgood}, evaluated at $p$:
\begin{align*}
\frac{\partial }{\partial x_j}\left( \frac{x_i}{1+\sum_{\ell=1}^{n}  \alpha_\ell x_{\ell-1}}\right)\Big|_{x=p} & =
 \begin{cases}    \frac{1}{\alpha_1 n} -   \frac{\alpha_{j+1} }{(\alpha_1 n)^2}  &\text{for}\quad i=j, \\ 
 \frac{- \alpha_{j+1} }{(\alpha_1 n)^2} & \text{for}\quad i\neq j,
\end{cases}
\\
\frac{\partial }{\partial x_j}\left( \frac{x_i}{1+\sum_{\ell=1}^{n}  \beta_\ell x_{\ell}}\right)\Big|_{x=p} & =
 \begin{cases}    \frac{1}{\alpha_1 n} -   \frac{\beta_{j} }{(\alpha_1 n)^2}  &\text{for}\quad i=j, \\ 
 \frac{- \beta_{j} }{(\alpha_1 n)^2} & \text{for}\quad i\neq j,
\end{cases}
\end{align*}
where $\alpha_{n+1}=\beta_0=0$. 
The matrix $J$ might be reformulated in terms of two other matrices:
{\small \begin{align*}
\alpha_1 n J &=J_1+J_2 \\
&= \begin{pmatrix}   -1 & \phantom{-}1 & \phantom{-}0 & \ldots & \phantom{-}0 \\ \phantom{-}1 & -2 & \phantom{-}1 & \ldots & \phantom{-}0 \\ \vdots & \ddots & & \ddots & \vdots \\ \phantom{-}0 & \ldots & \phantom{-}1 & -2 & \phantom{-}1 \\ \phantom{-}0  & \ldots &  \phantom{-}0  & \phantom{-}1 & -1 \end{pmatrix}  + \frac{1}{\alpha_1 n } 
\begin{pmatrix}  
\alpha_1 &  \alpha_2-\beta_1 &  \alpha_3-\beta_2 & \dots &  \alpha_{n}-\beta_{n-1} &  - \beta_n \\
0 & 0 & 0 & \dots & 0 & 0 \\
\vdots & \vdots & \vdots & \vdots & \vdots & \vdots \\
0 & 0 & 0 & \dots & 0 & 0 \\
-\alpha_1 &  -(\alpha_2-\beta_1) &  -(\alpha_3-\beta_2) & \dots &  -(\alpha_{n}-\beta_{n-1}) &   \beta_n
\end{pmatrix}.
\end{align*}}
The rows $2,\dots,n$ of $\alpha_1 n\, J_2$ are zero because  the $j$-th entry is $-\alpha_{j} +  \alpha_{j}+\beta_{j-1}- \beta_{j-1}=0$.
Using Proposition~\ref{prop:good},   the matrix $\alpha_1 n J$ becomes  the following symmetric matrix:
\begin{equation}\label{eq:goodJ}
\widetilde{J}:=\alpha_1 n J = \begin{pmatrix}   -1 +\frac{1}{n} &  1 & 0 & \ldots & 0  & - \frac{1}{n}  \\ 
1 & -2 & 1 & \ldots & 0  & 0 \\ 0 &   1 & -2 & 1 & \ldots   & 0 \\    \vdots & \vdots & \ddots & \ddots & \ddots & \vdots  \\ 0 & 0 & \ldots & 1 & -2 & 1 \\  
-\frac{1}{n}  & 0 & \ldots &  0  & 1 & -1 + \frac{1}{n}\end{pmatrix}\in \R^{(n+1)\times (n+1)}.
\end{equation}
The  matrix $\widetilde J$ is independent of the choice of $\alpha_1$.  The rows $2,\dots,n$ of this matrix are clearly linearly independent.

\begin{proposition}\label{eq:eigenvalues}
Consider the matrix $\widetilde{J}$ in \eqref{eq:goodJ}.
\begin{itemize}
\item  $\widetilde{J}$ has rank $n-1$. An orthogonal basis of the kernel of $\widetilde J$ is formed by the vector $(1,\dots,1)$ and the vector $v=(v_0,\dots,v_n)\in \R^{n+1}$ with
$$v_i = n - 2i,\qquad i=0,\dots,n.$$
\item $\widetilde{J}$ has $n-1$ real negative eigenvalues, counted with multiplicity, and two zero eigenvalues.
\end{itemize}
\end{proposition}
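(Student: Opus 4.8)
The plan is to read off both claims simultaneously from the quadratic form attached to the symmetric matrix $\widetilde J$ in \eqref{eq:goodJ}. First I would recognise $\widetilde J$ as the tridiagonal second-difference matrix with $-1$ in the two diagonal corners (the negated graph Laplacian of the path $0\!-\!1\!-\!\cdots\!-\!n$) plus the rank-one correction $\tfrac1n(e_0-e_n)(e_0-e_n)^\top$ produced by the four corner entries $\pm\tfrac1n$, where $e_0,\dots,e_n$ are the standard basis vectors. A direct expansion then yields, for every $x=(x_0,\dots,x_n)\in\R^{n+1}$,
\[
x^\top \widetilde J\, x \;=\; -\sum_{i=0}^{n-1}(x_{i+1}-x_i)^2 \;+\; \frac1n\,(x_0-x_n)^2 .
\]

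Next I would exploit the telescoping identity $x_0-x_n=-\sum_{i=0}^{n-1}(x_{i+1}-x_i)$. Writing $d_i=x_{i+1}-x_i$, the Cauchy--Schwarz inequality gives $(x_0-x_n)^2=\bigl(\sum_{i}d_i\bigr)^2\le n\sum_i d_i^2$, so that
\[
x^\top \widetilde J\, x \;\le\; -\sum_i d_i^2 + \sum_i d_i^2 \;=\; 0 .
\]
Hence $\widetilde J$ is negative semidefinite, and being symmetric it has no eigenvalue with positive real part; in fact all its eigenvalues are real and $\le 0$.

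To pin the kernel down exactly I would use the equality case. For a negative semidefinite symmetric matrix the isotropic vectors (those with $x^\top\widetilde J x=0$) coincide with $\ker\widetilde J$, since in an orthonormal eigenbasis $x^\top\widetilde J x=\sum_k\lambda_k(u_k^\top x)^2$ with all $\lambda_k\le 0$. Equality in Cauchy--Schwarz forces all increments $d_i$ to be equal, i.e. $x_i$ to be an affine function of the index $i$; this is exactly the two-dimensional space spanned by $(1,\dots,1)$ and $v=(v_0,\dots,v_n)$ with $v_i=n-2i$. A one-line check confirms both vectors lie in $\ker\widetilde J$ and that $\sum_{i=0}^n(n-2i)=0$, so they are orthogonal. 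Consequently the nullity of $\widetilde J$ is exactly $2$ (equivalently the rank is $n-1$, consistent with the stated linear independence of the $n-1$ interior rows), and the remaining $n-1$ eigenvalues are strictly negative, establishing both bullet points.

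The only genuinely delicate point is establishing that the kernel is \emph{exactly} two-dimensional: one must argue both that for a negative semidefinite form the null vectors of the form are precisely the matrix kernel, and that the Cauchy--Schwarz equality case contributes nothing beyond the affine vectors. Everything else---the explicit quadratic-form identity, the verification that $(1,\dots,1)$ and $v$ are annihilated by $\widetilde J$, and their orthogonality---is a routine computation.
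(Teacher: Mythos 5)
Your proof is correct, and it takes a genuinely different and substantially shorter route than the paper's. You identify $\widetilde J$ as the negated path Laplacian plus the rank-one correction $\tfrac1n(e_0-e_n)(e_0-e_n)^\top$, derive the quadratic-form identity $x^\top\widetilde J x=-\sum_i d_i^2+\tfrac1n\bigl(\sum_i d_i\bigr)^2$ with $d_i=x_{i+1}-x_i$, and conclude negative semidefiniteness from Cauchy--Schwarz; the equality case (all increments equal, i.e.\ $x$ affine in the index) then pins the kernel down to exactly the two-dimensional span of $(1,\dots,1)$ and $v$, which simultaneously gives the rank, the two zero eigenvalues, and the strict negativity of the remaining $n-1$ real eigenvalues. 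The paper instead verifies the kernel vectors by direct computation, gets rank $\ge n-1$ from the interior rows, and then establishes negativity of the nonzero eigenvalues via Descartes' rule of signs applied to the characteristic polynomial, which requires a lengthy case-by-case induction showing that every nonzero principal minor of size $i$ has sign $(-1)^i$, invoking the Fiedler--Pt\'ak theory of matrices with non-positive off-diagonal entries. Your argument avoids that entire machinery and the external reference, at no loss for the proposition as stated; the one point you rightly flag as needing care --- that for a semidefinite symmetric matrix the null vectors of the form are exactly the kernel --- is standard and you justify it correctly via the eigenbasis expansion. The paper's minor computations do yield finer information (signs of all coefficients of the characteristic polynomial), but that extra information is not used elsewhere, so your approach is a clean improvement here.
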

\begin{proof} 
It is straightforward to see that  $(1,\dots,1)$ belongs to the kernel of $\widetilde{J}$, because the column sums are zero.
This vector is linearly independent of $v$. 
To show that $v$ also is in the kernel of $\widetilde{J}$ we note that the scalar product of $v$ with the rows $i=2,\dots,n$ is
$$ v_{i-2} -2v_{i-1} + v_{i} = n - 2(i-2)  - 2(n -2i+2) + n - 2i= 0.$$
For rows with index $1$ and $n+1$, we have
\begin{align*}
\left(-1+\frac{1}{n}\right) v_0 + v_1 - \frac{1}{n}v_n = -n+1 + n-2 -1(-1)& =0, \\
-\frac{1}{n}v_0 +v_{n-1}+\left(-1+\frac{1}{n}\right)v_n = -1 + (-n+2) +(n-1) & =0.
\end{align*}
Hence $v$ is  in the kernel of $\widetilde{J}$. Since the rank of $\widetilde{J}$ is at least $n-1$, $(1,\dots,1)$ and $v$ form a basis of the kernel.
A simple computation shows that the sum of the entries of $v$ is zero, that is, $(1,\ldots,1)\cdot v=0$:
$$ \sum_{i=0}^n ( n - 2i) = n(n+1) - 2  \sum_{i=0}^n  i =n(n+1) - 2   \tfrac{n(n+1)}{2} =0. $$
Hence the basis is orthogonal.

To study the eigenvalues of $\widetilde{J}$ we proceed as follows. Since $\widetilde{J}$ is real and symmetric, all eigenvalues are real. 
We already know that zero is an eigenvalue with multiplicity two. By Descartes' rule of signs, if all coefficients of the characteristic polynomial of $\widetilde{J}$ are non-negative, then the number of positive roots is zero. Hence the remaining $n-1$ roots of the characteristic polynomial (counted with multiplicity) must be negative.  

To show that  all coefficients of the characteristic polynomial of $\widetilde{J}$ are non-negative, we show that for $i=1,\dots,n-1$, all  non-zero principal minors of size $i$ of $\widetilde{J}$ have sign $(-1)^i$.
 Clearly, this holds for the principal minors of size $1$, since the diagonal entries of $\widetilde{J}$ are all negative (recall $n>1$).

Consider the determinant of a matrix of size $i$ of the form
\begin{equation}\label{eq:Ai}
 A_i:= \begin{pmatrix}
-a_1 & 1 & 0 & \dots & 0 \\
1 & -a_2 & 1 & \dots & 0 \\ 
\vdots & \ddots & \ddots & \ddots & \vdots \\
0 & \dots & 1 & -a_{i-1} & 1 \\
0 &   \dots & 0  & 1 & -a_i
\end{pmatrix}
\end{equation}
for the following cases: 
\begin{itemize}
\item[(1)] $a_j=2$ for all $j=1,\ldots,i$.
\item[(2)] $a_1=1-\tfrac{1}{n}$ and $a_j=2$ for $j>1$.
\item[(3)] $a_i=1-\tfrac{1}{n}$ and $a_j=2$ for $j<i$.
\end{itemize}
For the three cases, define respective vectors:  (1) $x=(1,\dots,1)$, (2) $x=(n,n-1,\dots,n-i+1)$, and (3) $x=(n-i+1,\dots,n-1,n)$. Then the components of  $-A_i x$ are non-negative. Using \cite[Theorem 5.4]{czechMath} on the matrix $-A_i$  and subsequently using \cite[Theorem 5.1 2\degree]{czechMath}, we conclude that the principal minors of $-A_i$ are non-negative, hence in particular  $\det(A_i)$ is either zero or has sign $(-1)^i$.

For a subset $H$ of $\{1,\dots,n+1\}$ of size $n+1-j$, $j=2,\ldots,n-1$, we consider  the submatrix $\widetilde{J}_H$ of size $j$ of $\widetilde{J}$  obtained by removing the columns and rows  with indices in $H$. If $H$ contains $1$ or $n+1$, then the submatrix 
$\widetilde{J}_H$ is a block matrix with  blocks of the form \eqref{eq:Ai}. 
Hence the  determinant of  $\widetilde{J}_H$ is the product of the determinants of the blocks, and it is either zero or has sign $(-1)^j$.

If $H$ contains neither   $1$ nor $n+1$, then $\widetilde{J}_H$ has the form
$$\widetilde{J}_H=\left(  \begin{array}{c|cccc|c}   -1 +\frac{1}{n} &  z_1 & 0 & \ldots & 0  & - \frac{1}{n}  \\ \hline
z_1 &  &  &  &   & 0 \\   
0 & & B &&& \vdots  \\  \vdots &  &  &  &  & 0 \\ 0 &  &  &  &  & z_2  \\  \hline
-\frac{1}{n}  & 0 & \ldots &  0  & z_2 & -1 + \frac{1}{n}\end{array}\right),   $$
where $z_1=1$ if $2\not\in H$  and $z_1=0$ otherwise, $z_2=1$ if $n\not\in H$ and $z_2=0$ otherwise, and $B$ is a block matrix of size $j-2$ with blocks of the form \eqref{eq:Ai}  with diagonal entries equal to $-2$.

If $z_1=z_2=0$ then $\det(\widetilde{J}_H)$ equals $\det(B)$ times the determinant of 
$$\begin{pmatrix} -1 +\frac{1}{n} & - \frac{1}{n}  \\ - \frac{1}{n}  & -1 +\frac{1}{n} \end{pmatrix}.$$
 Since the later is positive, the determinant has the desired sign. 

If $z_1=1$ and $z_2=0$ (and analogously for $z_1=0, z_2=1$), then we expand the determinant along the last column and obtain
$$
\det(\widetilde{J}_H)= (-1 + \tfrac{1}{n}) \det \left(  \begin{array}{c|cccc}   -1 +\frac{1}{n} &  1 & 0 & \ldots & 0    \\ \hline
1 &  &  &  &    \\   0 &  &B  &  &    \\  \vdots &  &  &  &   \\ 0 &  &  &  &    \end{array}\right) 
+(-1)^{j} \tfrac{1}{n} \det \left(  \begin{array}{c|cccc}  
1 &  &  &  &    \\   0 &  & B &  &    \\   \vdots &  &  &  &     \\ 0 &  &  &  &     \\  \hline
-\frac{1}{n}  & 0 & \ldots &  0  & 0 \end{array}\right).
$$
Note in this case we necessarily have $j\geq 3$. 
Let $B_1$ denote the matrix obtained by removing the first column and the first row of $B$. Then $B_1$ is a block matrix of size $j-3$ with blocks of the form \eqref{eq:Ai} and hence its determinant, if non-zero, has sign $(-1)^{j-3}$.
We then have
 $$\det(\widetilde{J}_H) =  (-1 + \tfrac{1}{n})^2  \det(B) - \det (B_1) - \tfrac{1}{n^2} \det(B)  = (1-\tfrac{2}{n}) \det(B) - \det (B_1) .$$
By assumption $n\geq 2$. Further, $\det(B)$ has sign $(-1)^{j-2}$ if non-zero, and $\det(B_1)$ has sign $(-1)^{j-3}$ if non-zero. Hence, the determinant has sign $(-1)^{j}$ if non-zero.

Finally, assume $z_1=1$ and $z_2=1$. Then $j\geq 4$. The matrix $B$ has at least two blocks. If $B$ has more than two blocks, say $\ell$ blocks $B_1,\dots,B_\ell$, the determinant of 
$\widetilde{J}_H$ agrees with the product of the determinants of $B_i$, $i=2,\dots,\ell-1$, times the determinant of a matrix similar to $\widetilde J_H$ but with only two blocks $B_1$ and $B_\ell$. Since the determinant of $B_i$, $i=2,\dots,\ell-1$, has the desired sign, it is sufficient to show that the determinant of $\widetilde{J}_H$ has sign $(-1)^{j}$ whenever $B$ has exactly two blocks:
{\small $$\widetilde{J}_H= \left(  \begin{array}{ccccc|ccccc}   -1 +\frac{1}{n} &  1 & 0 & \ldots & 0 & 0  & \dots & \dots & 0 & - \frac{1}{n}  \\  
1 & -2  & 1 & \dots &   0  & 0 & \dots & \dots & \dots & 0 \\ 
\vdots & \ddots & \ddots &\ddots & \vdots  & \vdots & \dots & \dots & \dots & \vdots \\ 
0 & \dots & 1 & -2 & 1 &  \vdots & \dots & \dots & \dots & \vdots  \\ 
0 & \dots & 0 & 1 & -2 &  0 & \dots & \dots & \dots & 0 \\  \hline
0 & \dots & \dots & \dots & 0 & -2 & 1 & 0 & \dots & 0 \\ 
\vdots & \dots & \dots & \dots & \vdots & 1 & -2 & 1 & \dots & 0 \\
\vdots & \dots & \dots & \dots & \vdots & \vdots & \ddots & \ddots & \ddots & \vdots \\
0 & \dots & \dots & \dots & 0 & 0 & \ddots & 1 & -2 & 1 \\
- \frac{1}{n} & 0 & \dots & \dots & 0 & 0 & \dots & 0 & 1 & -1 +\frac{1}{n} 
\end{array}\right).$$}

 We will show by induction in $j$ that  the determinant is $(-1)^j \tfrac{n-j}{n}$, which has the desired sign. For the induction basis we need to consider the two smallest cases $j=4,5$, and it is also convenient to check separately the case $j=6$. For $j=4$, the matrix $B$ is 
$$\left(  \begin{array}{cccc}   -1 +\frac{1}{n} &  1 & 0 &  - \frac{1}{n}  \\  
1 & -2  & 0  & 0 \\   0 & 0 & -2 & 1  \\ 
-\frac{1}{n}  & 0 & 1 & -1 + \frac{1}{n}\end{array}\right),   $$
which has  determinant $\tfrac{n-4}{n}$. Similarly, for $j=5,6$, we check that  the determinant is  $-\tfrac{n-5}{n}$ and $\tfrac{n-6}{n}$ respectively.

Assume now that the statement holds for all $4\leq j'\leq j-1$. We will prove the statement for $j'=j$, with $j\geq 7$.
Let $i,i+1$ be the indices of the last column of the first block and the first column of the second block of $B$, respectively.
Since $j\geq 7$ and $i\geq 2$, at least one of the two inequalities hold $i\geq 4$ or $j-i\geq 4$. We assume that $i\geq 4$, meaning the first block of $B$ has at least size $3$, and the other case follows symmetrically. 
The statement will follow by applying the Laplace expansion of the determinant of $\widetilde{J}_H$ along the rows $i-1,i$. For this, we let $\widetilde{J}_{H,\{i-1,i\},\{j_1,j_2 \}}$ be the matrix obtained by removing the $(i-1)$-th and $i$-th rows and the $j_1$-th and $j_2$-th columns of $\widetilde{J}_H$.
Then
\begin{align*}
 \det(\widetilde{J}_H) &=  \det\begin{pmatrix} 1 & -2 \\ 0 & 1 \end{pmatrix} \det\Big(\widetilde{J}_{H,\{i-1,i\},\{i-2,i-1 \}}\Big)  + \det\begin{pmatrix} -2 & 1 \\ 1 & -2 \end{pmatrix} \det\Big(\widetilde{J}_{H,\{i-1,i\},\{i-1,i \}}\Big) \\ &  - \det\begin{pmatrix} 1 & 1 \\ 0 & -2 \end{pmatrix} \det\Big(\widetilde{J}_{H,\{i-1,i\},\{i-2,i \}}\Big).
\end{align*}
The matrix 
$\widetilde{J}_{H,\{i-1,i\},\{i-2,i-1 \}}$ has one zero column, hence the first term is zero. 
By the induction hypothesis, $ \det\Big(\widetilde{J}_{H,\{i-1,i\},\{i-1,i \}}\Big)=(-1)^{j-2} \frac{n-j+2}{n}$. The $(i-2)$-th column of $\widetilde{J}_{H,\{i-1,i\},\{i-2,i \}}$ has only one nonzero entry, equal to one, in the $(i-2)$-th entry. Removing the $(i-2)$-th column and row of this matrix, we obtain a matrix of the form $\widetilde{J}_H$ of size $j-3$. Here we use that $i\geq 4$.
These considerations give:
\begin{align*}
 \det(\widetilde{J}_H) &= 3(-1)^{j-2} \tfrac{n-j+2}{n} + 2(-1)^{j-3} \tfrac{n-j+3}{n} =(-1)^{j-2} ( (3-2)\tfrac{n-j}{n} 
+ 3\tfrac{2}{n} - 2\tfrac{3}{n} =(-1)^j \tfrac{n-j}{n},
\end{align*}
as claimed. This concludes the proof.
\end{proof}

\subsection{Study of the centre manifold} 
\label{sec:centre}

The next step towards the proof of Theorem~\ref{thm:MM} is to complete the study of the stability properties of the steady state $p$ by inspecting its centre manifold.  
Recall the MM system with our choice of parameters given in \eqref{eq:MMgood}, and the two orthogonal vectors spanning the kernel of the Jacobian of this system evaluated at the steady state $p$ (see Proposition \ref{eq:eigenvalues}):
\begin{align*}
v &=(v_0,\ldots,v_n),\quad v_i = n - 2i, \quad i=0,\dots,n,\\
e &=(1,\ldots,1).
\end{align*}
Recall also that the dynamics of \eqref{eq:MMgood} around $p$ is confined to the linear subspace $n+1 = x_0+\dots+x_n$. Let us refer to the system within this 
subspace as the restricted system. Since $v$ satisfies $e\cdot v=0$, the centre 
subspace of the steady state with respect to the restricted system is spanned by $v$.  This implies that the  centre manifold of $p$  is one dimensional and admits a parametrisation (at least locally around $p$) of the form 
\begin{equation}\label{eq:param_centre}
 x_ i(s)= 1 + v_i s + h_i(s), \qquad i=0,\dots,n,
\end{equation}
where 
\begin{equation}\label{eq:h}
 h_0(s)=0,\qquad h_n(s)= -\sum_{\ell=0}^{n-1} h_\ell(s),
 \end{equation}
and all $h_i$ vanish at least as fast as $s^2$ near $0$.
Let $h(s)=(h_0(s),\dots,h_n(s))$.
Since $x_0(s)= 1 +ns$, we have $s(x)=\frac{x_0-1}{n}$ for $x$ on the centre manifold (at least locally around $p$). Although the centre manifold may not be uniquely determined this will not cause a problem. The general theory says that for two different choices of the centre manifold, if the system, and hence the functions $h_i$, are $\mathcal{C}^k$, then the functions $h_i$ for the two choices of the centre manifold agree up to order $k$. Thus we choose a fixed centre manifold and the arguments which follow are independent of that choice. Note that
since the system itself is $\mathcal{C}^\infty$ there exists a centre manifold
of class $\mathcal{C}^k$ for any finite $k$. To justify the calculations 
in the following we just need to choose $k$ sufficiently large.

We investigate the equations of the  centre manifold  by selecting $n-1$ linearly independent vectors $\nu_1,\dots,\nu_{n-1}$ in 
$\langle v, e \rangle ^{\perp}\subseteq \R^{n+1}$.   Using \eqref{eq:param_centre} we obtain the  equations $ \nu_i \cdot x(s) = \nu_i \cdot h(s)$, $i=1,\dots,n-1,$
or equally,  in terms of the variable $x$ (on the centre manifold),
$$\nu_i \cdot x = \nu_i \cdot h(s(x)),\qquad i=1,\dots,n-1.$$
If $f$ denotes the right-hand side of the MM system \eqref{eq:MMgood}, then evaluation at $x=x(\tau)$ and differentiation with respect to time $\tau$ gives
\begin{align*}
 \nu_i \cdot f(x(\tau)) &= \nu_i \cdot h'(s(x(\tau))) \,\frac{d}{d\tau} s(x(\tau))   \nonumber \\
 &=\nu_i \cdot h'(s(x(\tau))) \left( \frac{d}{d\tau} \frac{x_0(\tau)-1}{n}\right) \\
 &=\frac{f_0(x(\tau)) }{n} \,\nu_i \cdot h'(s(x(\tau))),  \nonumber
\end{align*}
or simply,
\begin{equation}\label{eq:fscalar}
\nu_i \cdot f(x)=\frac{f_0(x) }{n} \,\nu_i \cdot h'(s(x)),\qquad i=1,\dots,n-1. 
\end{equation}
(Note that  $x(\tau)$ and $x(s)$ mean two different things.)

Consider the vector field $f$ multiplied by the denominators of $f$ and expressed in terms of the parameterisation of the centre manifold. It takes the form
\begin{align*}
w_0(s) &=  -x_0(s)\left(1+\sum_{\ell=1}^n \beta_{\ell} x_\ell(s) \right)    + x_1(s)\left(1+\sum_{\ell=1}^{n} \alpha_{\ell} x_{\ell-1}(s) \right), \\
w_i(s) &= (x_{i-1}(s)-x_i(s))\left(1+\sum_{\ell=1}^n \beta_{\ell} x_\ell(s) \right) 
+(x_{i+1}(s)-x_i(s))\left(1+\sum_{\ell=1}^{n} \alpha_{\ell} x_{\ell-1}(s) \right), \quad i=1,\dots,n-1,\\
w_n(s) &=   x_{n-1}(s)\left(1+\sum_{\ell=1}^n \beta_{\ell} x_\ell(s) \right)  -x_n(s)\left(1+\sum_{\ell=1}^{n} \alpha_{\ell} x_{\ell-1}(s) \right).
\end{align*}
Let $w(s)=(w_0(s),\dots,w_n(s))$. Then, using \eqref{eq:fscalar} and the definition of $w$, we find
\begin{equation}\label{eq:vanish}
n( \nu_j\cdot w) = (\nu_j \cdot h') w_0, \qquad j=1,\dots,n-1,
\end{equation}
where the dependence on $s$ is suppressed.
This expression implies that the linear combination on the left side vanishes at least at one order higher than $w_0$, since $h'$ vanishes at least at order one. 

Our goal is to show that the first non-zero coefficient in the Taylor expansion of $w_0$  is negative and of order $n+1$ if $n$ is even and of order $n$ if $n$ is odd. If this is so, then it must be that locally around $p$ the flow of the vector field $w(s)$ is attracted towards $p$ as a negative coefficient of $w_0(s)$ implies the absolute value  $|s|$ is decreasing towards zero. Specifically, we prove the following theorem, where $w^{(m)}(s)$ denotes the $m$-th order term in the Taylor expansion of $w$.

\begin{theorem}\label{thm:taylor}
The first non-vanishing term in the Taylor expansion of $w_0$ is negative.
Specifically, the first non-vanishing term is
$$w_0^{(M)}(s)=-  \frac{3\cdot 2^{n+2}\, n}{(n+1)^2(n+2)}s^M,$$
with $M=n+1$ if $n$ is even and $M=n$ if $n$ is odd.
\end{theorem}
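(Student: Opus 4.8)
The plan is to sidestep any term‑by‑term computation of the centre‑manifold corrections $h_i(s)$—which for general $n$ would mean solving the invariance equations up to order $n$—and instead to show that the invariance equations force the centre manifold to coincide, to the order that matters, with the explicit algebraic locus $\mathcal{C}=\{x_k=u^kx_0\}$ of \eqref{eq:xiu}. First I would record the invariance in componentwise form: parametrising by $s=(x_0-1)/n$ and using that the vector field is tangent to the invariant manifold, one gets $n\,w_i(s)=x_i'(s)\,w_0(s)$ for all $i$, with $x_i'(s)=v_i+h_i'(s)$ (the componentwise strengthening of \eqref{eq:fscalar}). Writing $A,B$ for the two denominators in \eqref{eq:MMgood} and $W_k:=B\,x_k-A\,x_{k+1}$, the $w_i$ telescope: $w_0=-W_0$, $w_i=W_{i-1}-W_i$ for $1\le i\le n-1$, and $w_n=W_{n-1}$. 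Feeding $n\,w_i=x_i'w_0$ into $W_i=W_{i-1}+\tfrac{v_i+h_i'}{n}W_0$ and summing gives $W_k=W_0\,\tfrac{(k+1)(n-k)+H_k'}{n}$ with $H_k:=\sum_{i=1}^k h_i$. Since $B=uA$ with $u$ as in \eqref{eq:u}, and $W_0=-w_0=A\,q$ with $q:=u x_0-x_1=-w_0/A$, this is the linear recursion $x_{k+1}=u\,x_k-q\,r_k$, $r_k:=\tfrac{(k+1)(n-k)+H_k'}{n}$, solved by
\[
x_k=u^k x_0-q\sum_{j=0}^{k-1}r_j\,u^{k-1-j}.
\]
The decisive observation is that the deviation from $\mathcal{C}$ is proportional to $q=-w_0/A$, i.e.\ to $w_0$ itself.

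Next I would convert this into a statement about $w_0$ via two identities for the contraction against $v=(n,n-2,\dots,-n)$. Abel summation on the telescoped $w_i$ gives $v\cdot w=2(QA-PB)$ with $P=\sum_{k=0}^{n-1}x_k$ and $Q=\sum_{k=1}^{n}x_k$; substituting the parameter values of Proposition~\ref{prop:good}, using $\gamma_n(0)=\gamma_n(n)=-\tfrac1{n+1}$ and $X_{\rm tot}=n+1$, this collapses to
\[
v\cdot w=-2\,(x_n-x_0)\,(\gamma_n\cdot x).
\]
On the other hand, contracting $n\,w_i=x_i'w_0$ against $v$ yields $v\cdot w=\tfrac{w_0}{n}\bigl(|v|^2+v\cdot h'\bigr)$, where a direct sum gives $|v|^2=\sum_{i=0}^n(n-2i)^2=\tfrac{n(n+1)(n+2)}{3}$ and $v\cdot h'=O(s)$. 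Equating the two and using $x_n-x_0=-2ns+O(s^2)$ produces
\[
w_0=\frac{12n\,s}{(n+1)(n+2)}\,(\gamma_n\cdot x)\,\bigl(1+O(s)\bigr).
\]
Because $\gamma_n\cdot e=\gamma_n\cdot v=0$ (both are instances of \eqref{eq:binom0}, since the relevant polynomials have degree $\le 2<n$), we have $\gamma_n\cdot x=\gamma_n\cdot h$, so everything now reduces to the leading term of $\gamma_n\cdot x$.

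Finally I would evaluate $\gamma_n\cdot x$ using the solved recursion and a generating function. From it, $\gamma_n\cdot x=x_0\,\Gamma(u)+c(s)\,w_0$, where $c(s)$ is smooth at $s=0$ and $\Gamma(u):=\sum_{k=0}^n\gamma_n(k)u^k$ is summed in closed form by the binomial theorem: $\Gamma(u)=-\tfrac{(1-u)^n}{n+1}$ for $n$ even and $\Gamma(u)=-\tfrac{(1-u)^{n-1}(1+u)}{n+1}$ for $n$ odd. Substituting $\gamma_n\cdot x=x_0\Gamma(u)+c\,w_0$ into the displayed formula for $w_0$ and solving for $w_0$ (the bracket $1-\tfrac{12ns}{(n+1)(n+2)}c(1+O(s))$ is invertible near $s=0$) eliminates the unknown correction and leaves
\[
w_0=\frac{12n\,s}{(n+1)(n+2)}\,x_0\,\Gamma(u)\,\bigl(1+O(s)\bigr).
\]
Since $u=1-2s+O(s^2)$ and $x_0=1+ns$, one reads off $\Gamma(u)=-\tfrac{2^n}{n+1}s^{M-1}+O(s^M)$ with $M-1=n$ (even) and $M-1=n-1$ (odd), whence $w_0^{(M)}=-\tfrac{12\cdot 2^n n}{(n+1)^2(n+2)}s^M=-\tfrac{3\cdot 2^{n+2}n}{(n+1)^2(n+2)}s^M$, as claimed.

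The main obstacle—and the step I expect to demand the most care—is the first paragraph: recognising that the invariance equations, via the telescoping of the $w_i$, pin the centre manifold to the explicit curve $\mathcal{C}$ with an error of the same order as $w_0$, so that the genuinely transcendental object $h(s)$ never has to be computed beyond this qualitative fact. Once that is established, the binomial identity \eqref{eq:binom0} and the closed form for $\Gamma$ reduce the remainder to bookkeeping; the only point requiring vigilance is the self‑consistency step, where the $c(s)w_0$ term must be shown to be strictly higher order and hence absorbable.
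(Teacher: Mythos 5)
Your argument is correct, and it takes a genuinely different route from the paper's, even though both proofs pivot on the same two contractions of the invariance equation against $v$: your identity $v\cdot w=-2(x_n-x_0)(\gamma_n\cdot x)$ is exactly \eqref{eq:chi-omega2} in disguise (since $\gamma_n\cdot e=\gamma_n\cdot v=0$ gives $\gamma_n\cdot x=\Gamma_n$, and $x_n-x_0=h_n-2ns$), while the contraction $v\cdot w=\tfrac{w_0}{n}\bigl(|v|^2+v\cdot h'\bigr)$ with $|v|^2=\tfrac{n(n+1)(n+2)}{3}$ plays the role of the implication (iii)$\Rightarrow$(iv) in Lemma~\ref{lem:obs1}. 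Where you diverge is in pinning down the leading order of $\Gamma_n$. The paper does this by induction on Taylor coefficients: Lemmas~\ref{lemma:wm=0} and~\ref{lem:obs3} express $h^{(m)}$ in the descending-factorial basis $e_{[1]},\dots,e_{[m]}$ with leading coefficient $(-1)^m2^m/m!$, and Lemma~\ref{lem:useful} supplies the orthogonality $\gamma_n\perp e_{[k]}$ that kills $\Gamma_n^{(m)}$ for $m\le M-2$ and evaluates $\Gamma_n^{(M-1)}$. You instead telescope the componentwise invariance relation $nw_i=x_i'w_0$ into the first-order recursion $x_{k+1}=ux_k-qr_k$ with $q=-w_0/A$, so the centre manifold deviates from the algebraic curve \eqref{eq:xiu} only by a smooth multiple of $w_0$; the closed form $\sum_k\gamma_n(k)u^k=-\tfrac{(1-u)^n}{n+1}$ (resp.\ $-\tfrac{(1-u)^{n-1}(1+u)}{n+1}$) together with $1-u=2s+O(s^2)$ then delivers both the vanishing order and the factor $2^n$ in one stroke, and the self-consistency step legitimately absorbs the $c(s)w_0$ correction since $c$ is bounded near $s=0$. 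This buys a substantially shorter proof that bypasses Lemmas~\ref{lemma:wm=0}--\ref{lemma:gamma} and explains conceptually where $2^{n+2}$ comes from; the paper's induction, in exchange, produces the explicit structure of $h^{(m)}$ and the intermediate vanishing statements $w^{(m)}=0$, $\Gamma_n^{(m)}=0$ as standalone facts. I verified the key computations (the collapse $QA-PB=-(x_n-x_0)(\gamma_n\cdot x)$ using $\alpha_\ell=\beta_{\ell-1}$, $\beta_n=\alpha_1$ and $\gamma_n(0)=\gamma_n(n)=-\tfrac{1}{n+1}$; $r_0=1$; and the case $n=2$, which gives $-\tfrac{8}{3}s^3$ by both methods). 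The only point you should state explicitly is that $r_k(s)$ and $c(s)$ are only as differentiable as the chosen $\mathcal{C}^k$ centre manifold, so $k$ must be taken larger than $M$ for the $O(s)$ bookkeeping to be valid --- exactly the caveat the paper already makes.
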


In order to prove Theorem~\ref{thm:taylor}, we investigate the order terms $w_i^{(m)}(s)$ of $w_i(s)$ through a series of technical lemmas.  We start by introducing a new quantity. Let
\begin{equation}\label{eq:gamma2}
  \Gamma_n(s) := \sum_{\ell=0}^n \gamma_{n}(\ell) h_\ell(s)=\gamma_n\cdot h(s),
  \end{equation}
and note that the zeroth and first order terms of  $\Gamma_n$ vanish. 
The proof of the next lemma is in Subsection~\ref{proof:gammaalpha}.

\begin{lemma}\label{lem:gammaalpha}
Based on \eqref{eq:param_centre}, the following two relations hold
\begin{align*}
1+\sum_{\ell=1}^n \beta_{\ell} x_\ell(s) &= 
 \alpha_1 n - \alpha_1n s   - \Gamma_n(s).  \\
1+\sum_{\ell=1}^{n} \alpha_{\ell} x_{\ell-1}(s) &= 
\alpha_1 n + \alpha_1 n s - \alpha_1 h_n(s)   - \Gamma_n(s).
\end{align*}
\end{lemma}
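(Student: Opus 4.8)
The plan is to substitute the centre-manifold parametrisation \eqref{eq:param_centre}, namely $x_\ell(s)=1+v_\ell s+h_\ell(s)$ with $v_\ell=n-2\ell$, directly into each left-hand side and to sort the outcome into three groups: a constant, a term linear in $s$, and a term collecting the $h_\ell(s)$. For the constant and linear parts I would use the explicit parameter description from Proposition~\ref{prop:good}, namely $\beta_\ell=\alpha_1-\gamma_n(\ell)-\tfrac{1}{n+1}$ and $\alpha_\ell=\alpha_1-\gamma_n(\ell-1)-\tfrac{1}{n+1}$ (the latter valid for all $\ell=1,\dots,n$ because $\gamma_n(0)=-\tfrac{1}{n+1}$), while the $h$-part I would simplify using the two constraints \eqref{eq:h}: $h_0(s)=0$ and $\sum_{\ell=0}^n h_\ell(s)=0$. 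The quantity $\Gamma_n(s)=\gamma_n\cdot h(s)$ from \eqref{eq:gamma2} will appear naturally as the $h$-group.

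For the first identity I expand $1+\sum_{\ell=1}^n\beta_\ell x_\ell(s)$. The constant part is $1+\sum_{\ell=1}^n\beta_\ell=\alpha_1 n$ by Proposition~\ref{prop:good}. Writing $\beta_\ell$ via $\gamma_n(\ell)$, the coefficient of $s$ becomes $\alpha_1\sum_\ell v_\ell-\tfrac{1}{n+1}\sum_\ell v_\ell-\sum_\ell\gamma_n(\ell)v_\ell$; using the elementary facts $\sum_{\ell=0}^n v_\ell=0$ and $\gamma_n(0)=\gamma_n(n)=-\tfrac{1}{n+1}$, this collapses to $-\alpha_1 n$ provided $\gamma_n\cdot v=0$. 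Finally the $h$-group is $(\alpha_1-\tfrac{1}{n+1})\sum_{\ell=1}^n h_\ell(s)-\Gamma_n(s)$, and since $\sum_{\ell=1}^n h_\ell(s)=\sum_{\ell=0}^n h_\ell(s)=0$ only $-\Gamma_n(s)$ survives. The second identity follows the same template after the reindexing $\ell\mapsto\ell-1$: here the relevant boundary computations are $\sum_{m=0}^{n-1}v_m=n$ and $\sum_{m=0}^{n-1}h_m(s)=-h_n(s)$, which are exactly what produce the extra terms $+\alpha_1 n\,s$ and $-\alpha_1 h_n(s)$ compared with the first identity; the residual $h$-group again reduces to $-\Gamma_n(s)$ once $\gamma_n(n)=-\tfrac{1}{n+1}$ is used to account for the $\ell=n$ term.

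The one genuinely non-routine step, and the main obstacle, is establishing $\gamma_n\cdot v=\sum_{\ell=0}^n\gamma_n(\ell)v_\ell=0$, where the even/odd split of \eqref{eq:gamman} enters. For $n$ even, $\gamma_n(\ell)v_\ell=\tfrac{(-1)^{\ell+1}}{n+1}(n-2\ell)\binom{n}{\ell}$, and the sum vanishes by the binomial identity \eqref{eq:binom0} applied to the degree-one polynomial $P(\ell)=n-2\ell$ (legitimate since $n>1$). For $n$ odd, $\gamma_n(\ell)v_\ell=\tfrac{(-1)^{\ell+1}}{n(n+1)}(n-2\ell)^2\binom{n}{\ell}$, and \eqref{eq:binom0} applies to the degree-two polynomial $P(\ell)=(n-2\ell)^2$, which has degree $2<n$ because the smallest odd $n\ge 2$ is $n=3$. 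In both cases the full sum over $\ell=0,\dots,n$ is zero, so $\gamma_n\cdot v=0$ and the linear coefficients reduce as claimed. Assembling the three groups for each side then yields the two displayed relations.
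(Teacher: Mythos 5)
Your proposal is correct and follows essentially the same route as the paper: substitute the parametrisation, use Proposition~\ref{prop:good} for the constant parts, and reduce the linear and $h$-parts via the orthogonality of $\gamma_n$ to $e$ and $e_{[1]}$ (equivalently to $v$) together with $\sum_{\ell=0}^n h_\ell=0$. The only cosmetic difference is that you re-derive $\gamma_n\cdot v=0$ directly from the binomial identity \eqref{eq:binom0}, whereas the paper quotes the equivalent fact $\sum_{\ell=1}^n \ell\,\gamma_n(\ell)=0$ from Lemma~\ref{lem:useful}, whose proof rests on the same identity.
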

 
 Using Lemma~\ref{lem:gammaalpha} and  \eqref{eq:param_centre}, we have the following expressions: \begin{align}\label{eq:w}
w_0(s) &=2\alpha_1 n(n-1)s^2  +\alpha_1(n h_{1}-h_{n}) +\alpha_1(  (2-n)h_{n} + n h_{1}) s -\alpha_1 h_{n}h_{1}+(   2s-h_{1})\Gamma_n,  \nonumber\\
w_i(s) &=-4\,\alpha_1 ns^2 +\,\alpha_1 n(h_{i-1} -2h_i +h_{i+1}  ) +\alpha_1( -nh_{i-1}+n h_{i+1}+2h_{n})s 
  \\ &\quad +\alpha_1 h_{n}(h_{i}-h_{i+1}) +    (-h_{i-1}+2 h_i -h_{i+1})\Gamma_n, \hspace{3cm} \text{for}\quad i=1,\dots,n-1, \nonumber \\ 
w_n(s) &=  2 \alpha_1 n(n-1) s^2  + \alpha_1  ( n h_{n-1} + ( 1-n) h_{n} ) -\,\alpha_1 n (2h_{n}+h_{n-1})s\nonumber  \\ &\quad  +     \alpha_1 h_{n}^{2}+  ( h_n-  h_{n-1}-2s)\Gamma_n, \nonumber
\end{align}
where  the argument $s$ of $h_i$ and $\Gamma_n$ is omitted to ease notation.  The details of the proof of \eqref{eq:w} are not given. A key ingredient in the proof of Theorem~\ref{thm:taylor} is the following function of $s$:
 \begin{align}\label{eq:chi-omega1}
\chi_n(s)=\sum_{j=1}^n j w_j(s)= -\frac{1}{2}v\cdot w(s),
\end{align}
where the second equality follows from the trivial fact that $e\cdot w=0$ and the definition of $v$. 
We further obtain (see Subsection~\ref{proof:chi} for a proof):
 \begin{align}\label{eq:chi-omega2}
 \chi_{n}(s)=(h_n(s)- 2ns  ) \Gamma_n(s),
\end{align}
which implies that terms of $\chi_n$ of order zero, one and two vanish. We also note that $w^{(0)}(s)=w^{(1)}(s)=0$, which trivially follows from \eqref{eq:w}.

\begin{lemma}\label{lem:obs1}
Consider the following statements, where $m\ge 1$ and  $\nu_1,\dots,\nu_{n-1}$ is a basis of 
$\langle v, e \rangle ^{\perp}\subseteq \R^{n+1}$:
\begin{enumerate}[(i)]
\item  $w_0^{(m-1)}(s)=0$.
\item $\nu_j\cdot w^{(m)}(s)=0$ for all $j=1,\ldots,n-1$.
\item $w^{(m)}(s)=K_m v s^m$, that is, $w_i^{(m)}(s)=K_m v_i s^m$ for all  $i=0,\ldots,n$ and some $K_m\in\R$.
\item $\chi_n^{(m)}(s)=-K_m\frac{n(n+1)(n+2)}{6} s^m$ for some $K_m\in\R$.
\end{enumerate}
Then, for $m\geq 1$,  the following  implications hold:
$$ \textrm{(i)} \quad \Rightarrow \quad  \textrm{(ii)} \quad \Rightarrow \quad  \textrm{(iii)} \quad \Rightarrow \quad  \textrm{(iv)}. $$
\end{lemma}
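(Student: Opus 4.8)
The plan is to establish the three implications in turn, the first being the substantive one. For (i) $\Rightarrow$ (ii) I would start from the scalar identities \eqref{eq:vanish}, namely $n(\nu_j\cdot w)=(\nu_j\cdot h')\,w_0$. By \eqref{eq:param_centre}--\eqref{eq:h} each $h_i$ vanishes at least to order two at $s=0$, so $h'$, and hence every $\nu_j\cdot h'$, vanishes at least to order one. The coefficient of $s^m$ on the right-hand side is the Cauchy product $\sum_{a\ge 1}(\nu_j\cdot h')^{(a)}\,w_0^{(m-a)}$, and every factor $w_0^{(m-a)}$ occurring here has order $m-a\le m-1$. Under the hypothesis that $w_0$ vanishes through order $m-1$ — which is precisely what (i) delivers in the inductive setting of the main proof, where one has already shown $w^{(m')}=0$ for all $m'<m$ — all these coefficients vanish, so the order-$m$ coefficient of the right-hand side is zero. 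Reading off the coefficient of $s^m$ on the left then gives $\nu_j\cdot w^{(m)}=0$, i.e.\ (ii). I expect the order bookkeeping in this step to be the main obstacle: one must be careful that the single assumption $w_0^{(m-1)}=0$ is used together with the standing vanishing of the lower-order terms of $w_0$ supplied by the surrounding induction (for the statement to hold verbatim at all $m$, (i) should be read as the vanishing of $w_0$ through order $m-1$).

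For (ii) $\Rightarrow$ (iii) I would write the homogeneous order-$m$ term as $w^{(m)}(s)=c_m\,s^m$ for a constant vector $c_m\in\R^{n+1}$. Statement (ii) asserts that $c_m$ is orthogonal to each $\nu_j$, and since $\nu_1,\dots,\nu_{n-1}$ is a basis of $\langle v,e\rangle^{\perp}$, this forces $c_m\in(\langle v,e\rangle^{\perp})^{\perp}=\langle v,e\rangle$, say $c_m=K_m v+L_m e$. To eliminate the $e$-component I would invoke the conservation identity $e\cdot w=0$: this holds because $w$ equals the MM vector field multiplied by the common positive factor $(1+\sum_\ell \alpha_\ell x_{\ell-1})(1+\sum_\ell \beta_\ell x_\ell)$, while the MM field has zero coordinate sum. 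Hence $e\cdot c_m=0$, and combining this with $e\cdot v=0$ and $e\cdot e=n+1$ (the orthogonality of the kernel basis established in Proposition~\ref{eq:eigenvalues}) gives $L_m(n+1)=0$, so $L_m=0$ and $w^{(m)}=K_m v\,s^m$, which is (iii).

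For (iii) $\Rightarrow$ (iv) I would substitute (iii) into the identity $\chi_n=-\tfrac12\,v\cdot w$ from \eqref{eq:chi-omega1}, obtaining $\chi_n^{(m)}=-\tfrac12 K_m\,(v\cdot v)\,s^m$. It then only remains to compute the squared norm of $v$, and a direct summation
\[
v\cdot v=\sum_{i=0}^n (n-2i)^2=(n+1)n^2-4n\cdot\tfrac{n(n+1)}{2}+4\cdot\tfrac{n(n+1)(2n+1)}{6}=\tfrac{n(n+1)(n+2)}{3},
\]
gives $\chi_n^{(m)}=-K_m\frac{n(n+1)(n+2)}{6}s^m$, i.e.\ (iv). This final implication is purely computational; the conceptual weight of the lemma lies in the order-counting of (i) $\Rightarrow$ (ii) and the orthogonal-complement argument of (ii) $\Rightarrow$ (iii).
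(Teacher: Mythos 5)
Your proof is correct and follows essentially the same route as the paper's: (i)$\Rightarrow$(ii) via the order count in \eqref{eq:vanish}, (ii)$\Rightarrow$(iii) via $w^{(m)}\in\langle\nu_1,\dots,\nu_{n-1},e\rangle^\perp=\langle v\rangle$ using $e\cdot w=0$, and (iii)$\Rightarrow$(iv) by direct computation (the paper sums $\sum_{j}j\,w_j^{(m)}$ rather than computing $v\cdot v$, which is the same arithmetic). Your observation that hypothesis (i) must be read as the vanishing of $w_0$ through order $m-1$ --- which is exactly how the lemma is invoked in the paper's surrounding induction --- is a correct reading of a point the paper's one-line justification leaves implicit.
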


\begin{proof}
 (i) $\Rightarrow$ (ii) follows from \eqref{eq:vanish}. If (ii)  holds, since $e\cdot w=0$, we have  $w^{(m)}\in \langle \nu_1,\dots,\nu_{n-1}, e\rangle^\perp = \langle v \rangle$, which implies (iii).  Finally, assume (iii), then, using  \eqref{eq:chi-omega1},
we have
\begin{align*}
\chi_n^{(m)} &=\sum_{j=1}^n j w_j^{(m)} =  \sum_{j=1}^n j  K_m (n-2j) s^m = K_m s^m \left( n\sum_{j=1}^n j - 2\sum_{j=1}^n j^2\right) =  K_m s^m \Big( \tfrac{n^2(n+1)}{2} -  \tfrac{2n(n+1)(2n+1)}{6}\Big) \\
&= -K_m\tfrac{n(n+1)(n+2)}{6} s^m.
\end{align*}
This completes the proof.
\end{proof}

Let 
\begin{equation}\label{eq:c}
\begin{aligned}
c_0^{(m)}  &=\Big( - 2n(n-1)s^2  -(  (2-n)h_{n} + n h_{1}) s + h_{n}h_{1} -(   2s-h_{1})\tfrac{\Gamma_n(s)}{\alpha_1}\Big)^{(m)}, \\
c_i^{(m)} &=\Big(4 ns^2 -  ( -nh_{i-1}+n h_{i+1}+2h_{n})s -   h_{n}(h_{i}-h_{i+1})
 -    (-h_{i-1}+2 h_i -h_{i+1})\tfrac{\Gamma_n(s)}{\alpha_1}\Big)^{(m)},  \quad \\ 
c_n^{(m)} &= \Big(-   2n(n-1) s^2 + n (2h_{n}+h_{n-1})s  -      h_{n}^{2}-  ( h_n-  h_{n-1}-2s)\tfrac{\Gamma_n(s)}{\alpha_1}\Big)^{(m)},
\end{aligned}
\end{equation}
where $i=1,\ldots,n-1$,
and consider the matrix 
$$ B= \begin{pmatrix}
1 & 2 & 3 & \dots & n-2 & \tfrac{-2}{n} \\[5pt]
0 & 1 & 2 & \dots & n-3 & \tfrac{-3}{n} \\
\vdots & \ddots & \ddots  & \ddots & \vdots & \vdots \\
\vdots & \ddots & \ddots  & 1 & 2 & \tfrac{-(n-2)}{n} \\
\vdots & \ddots & \ddots & \ddots & 1 & \tfrac{-(n-1)}{n} \\ 
0 & \dots & \dots & \dots & 0 & -1
\end{pmatrix} \in \R^{(n-1)\times (n-1)}.
$$

\begin{lemma}\label{lemma:wm=0}
If $w^{(m)}=0$ and $\Gamma_n^{(m-1)}=0$, then
$$ h_i^{(m)} = i h_1^{(m)} + \Big( B \big(  \tfrac{c_3^{(m)}}{n} ,  \dots ,   \tfrac{c_n^{(m)}}{n}  , c_0^{(m)}\big)^t \Big)_{i-1},\qquad i=2,\dots,n. $$
\end{lemma}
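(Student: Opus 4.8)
The plan is to convert the hypothesis $w^{(m)}=0$ into an explicit linear system for the coefficients $g_i:=h_i^{(m)}$ and then solve it. First I would compare \eqref{eq:w} with the definitions \eqref{eq:c} and observe that each component of $w$ splits into a linear-in-$h$ part plus $-\alpha_1 c_i$: explicitly $w_0=\alpha_1(nh_1-h_n)-\alpha_1 c_0$, $\ w_i=\alpha_1 n(h_{i-1}-2h_i+h_{i+1})-\alpha_1 c_i$ for $1\le i\le n-1$, and $w_n=\alpha_1(nh_{n-1}+(1-n)h_n)-\alpha_1 c_n$. Writing $L_i$ for the bracketed linear forms and using $\alpha_1>0$, the condition $w^{(m)}=0$ is, at order $m$, exactly the linear system $L_i(g)=c_i^{(m)}$ for $i=0,\dots,n$, with $g_0=0$ by \eqref{eq:h}.

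Next I would solve the interior equations $i=1,\dots,n-1$, which read $g_{i+1}-2g_i+g_{i-1}=c_i^{(m)}/n$. This is a constant-coefficient second difference with homogeneous solutions $1$ and $i$, and Green's function given by the ramp $(i-k)_+$, so the solution with $g_0=0$ and $g_1=h_1^{(m)}$ left free is
\[ g_i=i\,h_1^{(m)}+\tfrac1n\sum_{k=1}^{i-1}(i-k)\,c_k^{(m)},\qquad i=2,\dots,n. \]
This is already of the claimed shape, but it is written through $c_1^{(m)},\dots,c_{n-1}^{(m)}$ rather than the data $(c_3^{(m)},\dots,c_n^{(m)},c_0^{(m)})$ that feeds the matrix $B$.

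The bridge is a pair of linear relations among the $c_j^{(m)}$. Since $\sum_{j=0}^n L_j\equiv 0$ identically and $e\cdot w=0$, one gets $\sum_{j=0}^n c_j^{(m)}=0$; and since $\sum_{j=1}^n jL_j\equiv 0$ (a short telescoping identity), the definition $\chi_n=\sum_{j=1}^n j w_j$ from \eqref{eq:chi-omega1} gives $\chi_n=-\alpha_1\sum_{j=1}^n j\,c_j$, so $w^{(m)}=0$ forces $\chi_n^{(m)}=0$ and hence $\sum_{j=1}^n j\,c_j^{(m)}=0$. Here the hypothesis $\Gamma_n^{(m-1)}=0$ enters: it discards the $s\,\Gamma_n$ contributions to $c_0^{(m)}$ and $c_n^{(m)}$ in \eqref{eq:c}, so that the endpoint data depend only on strictly lower-order coefficients of $h$, consistently with \eqref{eq:chi-omega2}. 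Solving the two relations for $c_1^{(m)},c_2^{(m)}$ yields $c_1^{(m)}=-2c_0^{(m)}+\sum_{j=3}^n(j-2)c_j^{(m)}$ and $c_2^{(m)}=c_0^{(m)}-\sum_{j=3}^n(j-1)c_j^{(m)}$. Substituting these into the recurrence solution and collecting coefficients collapses it to
\[ h_i^{(m)}=i\,h_1^{(m)}+\tfrac1n\Big(-i\,c_0^{(m)}+\sum_{j=i+1}^n(j-i)\,c_j^{(m)}\Big), \]
and I would then verify that the coefficient $-i/n$ of $c_0^{(m)}$ is the last-column entry $B_{i-1,n-1}$, while the coefficient $(j-i)$ of $c_j^{(m)}/n$ (for $j>i$, and $0$ otherwise) is the staircase entry $B_{i-1,j-2}$; this is precisely the matrix identity in the statement.

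I expect the main work to be bookkeeping rather than conceptual. The two telescoping identities $\sum_j L_j\equiv 0$ and $\sum_j jL_j\equiv 0$ must be checked carefully (the second is the only nonobvious one), and the coefficient match against the explicit staircase block $l-p+1$ and the last column $-(p+1)/n$ of $B$ needs to be done cleanly, including the endpoint $i=n$ where $B_{n-1,n-1}=-1$. The one genuinely structural point is the second relation $\sum_j j\,c_j^{(m)}=0$: it supplies exactly the extra equation needed to trade $\{c_1^{(m)},c_2^{(m)}\}$ for $\{c_0^{(m)},c_n^{(m)}\}$, and it is available only because $\chi_n^{(m)}$ vanishes under the hypothesis $w^{(m)}=0$.
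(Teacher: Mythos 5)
Your proposal is correct and follows essentially the same route as the paper: both read $w^{(m)}=0$ as the inhomogeneous linear system $\widehat A\,h^{(m)}=c^{(m)}$ and exploit the same two left-kernel relations (equivalently $\sum_{j=0}^n c_j^{(m)}=0$ and $\sum_{j=1}^n j\,c_j^{(m)}=0$) to reach the identical formula, the only difference being execution --- the paper deletes the rows for $i=1,2$ and exhibits the explicit matrix $B$ with $BA=(\mathrm{Id}\mid \ast)$, whereas you solve the interior second-difference equations by discrete integration and then trade $c_1^{(m)},c_2^{(m)}$ for $c_0^{(m)}$, which reproduces exactly the entries of $B$. One small inaccuracy in your commentary: the second consistency relation already follows from $w^{(m)}=0$ via $\chi_n=\sum_j j\,w_j$, as you yourself note, so the hypothesis $\Gamma_n^{(m-1)}=0$ does not enter where you claim it does; it is in fact not needed for this lemma's conclusion at all and is merely carried along for the subsequent induction.
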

\begin{proof}
First of all note that $e\cdot w^{(m)}=0$, and also $(n+1,n,\dots,2,1)\cdot w^{(m)}=0$. Indeed, 
 $(n+1,n,\dots,2,1)=-(0,\dots,n) + (n+1)e$ and hence, by \eqref{eq:chi-omega2},
$$(n+1,n,\dots,2,1)\cdot w(s)^{(m)}= - \chi_n(s)^{(m)}  =  \Big((h_n(s) - 2ns  ) \Gamma_n(s)\Big)^{(m)} =0, $$
by hypothesis.  

The system $w^{(m)}=0$  can be written in matrix form as
\begin{equation}\label{eq:wzero}
 \widehat{A} \big(h_1^{(m)},\dots,h_n^{(m)}\big)^t = c^{(m)}
\end{equation}
with $c^{(m)} = \big(c_0^{(m)},\dots,c_n^{(m)}\big)^t$ and
{\small $$ \widehat{A} = \begin{pmatrix}
n & 0  & \dots & 0   & -1 \\
-2n & n & 0 & \dots & 0 \\
n & -2n & n & \ddots & \vdots \\
\vdots & \ddots & \ddots & \ddots  & 0 \\ 
\vdots & \vdots &   n&-2n& n 
\\
0 & \dots & \dots & n & 1-n
\end{pmatrix} \in \R^{(n+1)\times n}.
$$}%
This matrix has rank $n-1$, since  the rows $2,\dots,n$ are linearly independent. Since $e \widehat{A} = (n+1,n,\dots,2,1)\widehat{A} =0$, also $e\cdot c^{(m)}= (n+1,n,\dots,2,1)\cdot c^{(m)}=0$. Hence, by  deleting the second and the third row of $\widehat{A}$, moving the first row to be the last, the first column to be the last, dividing   the first $n-2$ resulting equations by $n$,  and reorganising the vector $(h_1^{(m)},\dots,h_n^{(m)})$,  system \eqref{eq:wzero} can be rewritten as
\begin{equation}\label{eq:reducedsystem}
A \left(\begin{array}{c} h_2^{(m)} \\  \vdots \\ h_n^{(m)} \\ h_1^{(m)}   \end{array}\right) =
\left(\begin{array}{c}   \tfrac{c_3^{(m)}}{n} \\  \vdots \\   \tfrac{c_n^{(m)}}{n}   \\ c_0^{(m)}   \end{array}\right),
\quad\textrm{with }\quad
A=    \begin{pmatrix}
1 & -2 & 1 & 0 & \dots & 0\\
0  & \ddots & \ddots & \ddots &\vdots   & \vdots \\ 
\vdots & \ddots &   1 &-2& 1  & 0 
\\
\vdots & \ddots & \ddots & 1 & \tfrac{1-n}{n}  &0 \\
 0  & \dots &  \dots & 0   & -1  & n
\end{pmatrix} \in \R^{(n-1)\times n}.
\end{equation}
A straightforward  computation shows that 
$$ B A = \begin{pmatrix}
1 & 0 & \dots & \dots & 0 & -2 \\
0 & 1 & \ddots &\ddots &  \vdots & -3 \\
\vdots & \ddots & \ddots  & \ddots  & \vdots & \vdots \\
\vdots & \ddots & \ddots & 1 & 0 & -n+1 \\
0  & \dots & \dots & 0 & 1 & -n
\end{pmatrix}. $$
Multiplying both sides of  equality \eqref{eq:reducedsystem} by $B$ gives the expression in the statement.
\end{proof}

Let
$$\ell_{[k]}= \ell(\ell-1)\cdots(\ell-k+1)= \frac{\ell!}{(\ell-k)!} = \binom{\ell}{k} k!$$
be the $k$-th descending factorial for $k=0,\ldots,n-1$.  By definition, $\ell_{[k]} =0$ if $k>\ell$.  If $k =0$, then $\ell_{[0]}=1$ for all $\ell\ge 0$. Furthermore, $0_{[0]}=1$ and $0_{[k]}=0$ for $k>0$. 
We introduce the following vectors in $\R^{n+1}$:
 $$e_{[k]}=\big(0_{[k]},\ldots,n_{[k]}\big),\qquad k=0,\dots,n.$$
 In particular,  $e_{[1]}=(0,1,2,\dots,n)$ and $e_{[2]}=(0,0,2,6,\dots,i(i-1),\dots,n(n-1))$.
The proof of the following lemma is given in Subsection~\ref{proof:useful}.

\begin{lemma}\label{lem:useful}
For $k=0,\ldots,n-1$, if $n$ is even, and for $k=0,\ldots,n-2$, if $n$ is odd, we have
$$e_{[k]}\cdot \gamma_n=0.$$
Furthermore,
$$ e_{[n]}\cdot \gamma_n =-\frac{n!}{n+1} \quad \textrm{for }\,n \textrm{ even}, \qquad e_{[n-1]}\cdot \gamma_n =-\frac{2(n-1)!}{n+1}  \quad \textrm{for }\,n \textrm{ odd}. $$
\end{lemma}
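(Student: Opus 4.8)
The plan is to expand the scalar product directly as $e_{[k]}\cdot\gamma_n=\sum_{\ell=0}^n \ell_{[k]}\,\gamma_n(\ell)$ and to exploit the single structural fact that the descending factorial $\ell_{[k]}=\ell(\ell-1)\cdots(\ell-k+1)$ is a polynomial in $\ell$ of degree exactly $k$. This turns every assertion into an application of the binomial identity \eqref{eq:binom0}, which annihilates $\sum_{\ell=0}^n(-1)^\ell P(\ell)\binom{n}{\ell}$ as soon as $\deg P<n$. The only facts I would record at the outset are the two clean evaluations of the alternating binomial sum in the descending-factorial basis: for $m<n$ the sum $\sum_{\ell=0}^n(-1)^\ell \ell_{[m]}\binom{n}{\ell}$ vanishes by \eqref{eq:binom0}, while for $m=n$ only the term $\ell=n$ survives (since $\ell_{[n]}=0$ for $\ell<n$ by the stated convention), giving $(-1)^n n!$.

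For $n$ even I would substitute $\gamma_n(\ell)=\tfrac{(-1)^{\ell+1}}{n+1}\binom{n}{\ell}$ from \eqref{eq:gamman}, so that
$$e_{[k]}\cdot\gamma_n=\frac{-1}{n+1}\sum_{\ell=0}^n(-1)^\ell \ell_{[k]}\binom{n}{\ell}.$$
Since $\ell_{[k]}$ has degree $k$, the sum vanishes by \eqref{eq:binom0} whenever $k\le n-1$, which is the claimed range. For $k=n$ the remark above gives the value $(-1)^n n!=n!$, hence $e_{[n]}\cdot\gamma_n=-\tfrac{n!}{n+1}$.

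For $n$ odd I would instead use $\gamma_n(\ell)=\tfrac{(-1)^{\ell+1}(n-2\ell)}{n(n+1)}\binom{n}{\ell}$, which yields
$$e_{[k]}\cdot\gamma_n=\frac{-1}{n(n+1)}\sum_{\ell=0}^n(-1)^\ell(n-2\ell)\ell_{[k]}\binom{n}{\ell}.$$
Now the relevant polynomial $(n-2\ell)\ell_{[k]}$ has degree $k+1$, so \eqref{eq:binom0} kills the sum precisely when $k+1\le n-1$, i.e.\ for $k\le n-2$, again matching the statement. The case $k=n-1$ is the one genuine computation, since the degree is then exactly $n$ and \eqref{eq:binom0} no longer applies. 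Here the efficient move is to rewrite the top-degree polynomial in the descending-factorial basis via the recurrence $\ell_{[n]}=\ell_{[n-1]}(\ell-(n-1))$, giving $(n-2\ell)\ell_{[n-1]}=-2\,\ell_{[n]}+(2-n)\,\ell_{[n-1]}$. The $\ell_{[n-1]}$ piece dies by \eqref{eq:binom0}, while the $\ell_{[n]}$ piece contributes $-2(-1)^n n!=2n!$ (using $n$ odd); dividing by $n(n+1)$ produces $e_{[n-1]}\cdot\gamma_n=-\tfrac{2(n-1)!}{n+1}$.

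The only subtle point, and thus the main obstacle, is the top-degree boundary term ($k=n$ for $n$ even, $k=n-1$ for $n$ odd), where \eqref{eq:binom0} is silent; all vanishing claims are automatic. The device that dissolves this obstacle is to change to the descending-factorial basis, in which the alternating binomial sum is transparent, so that extracting the leading behaviour reduces to the single evaluation isolating $\ell=n$ rather than an explicit expansion of $(n-2\ell)\ell_{[n-1]}$.
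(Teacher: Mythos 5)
Your proof is correct and follows essentially the same route as the paper: both reduce everything to the alternating-binomial identity \eqref{eq:binom0} applied to the explicit binomial form of $\gamma_n$, with the vanishing cases following from a degree count and the boundary cases ($k=n$ for $n$ even, $k=n-1$ for $n$ odd) computed by isolating the surviving top-degree contribution. The only cosmetic difference is that the paper first absorbs $\ell_{[k]}$ into the binomial coefficient and reindexes to get a short sum over $\binom{n-k}{j}$, whereas you apply \eqref{eq:binom0} directly to the degree-$k$ (or degree-$(k+1)$) polynomial $\ell_{[k]}P_n(\ell)$ and handle the boundary term via the descending-factorial basis; both yield the same values.
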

 
The above lemma shows that $\gamma_n$ is orthogonal to $e_{[k]}$ for all $k=0,\ldots,n-1$ if $n$ is even, and  for all $k=0,\dots,n-2$ if $n$ odd.

We note also the following identities 
for any non-negative integer $k\ge 0$ and $i\geq k$:
\begin{align}\label{eq:useful2}
 \sum_{\ell=0}^{i} \ell_{[k]}  = \sum_{\ell=k}^i \binom{\ell}{k} k!=\binom{i+1}{k+1}k!=\frac{1}{k+1}i_{[k+1]}+i_{[k]}. 
\end{align}

\begin{lemma}\label{lem:obs3} 
Let $M>0$ and assume that  $\Gamma_n^{(m)}=0$, for all $0\le m\le M$.
Then  we have
\begin{enumerate}[(i)]
\item $w^{(m)}=0$ for $0\le m\le M+1$,
\item  for  $2\le m\le M+1$,
$$h^{(m)}(s)=\left(\sum_{j=1}^m c_{mj} e_{[j]}\right) \!s^m, \quad\text{or equivalently}\quad h_\ell^{(m)}(s)=\left(\sum_{j=1}^m c_{mj} \ell_{[j]}\right)\! s^m,\quad \ell=0,\ldots,n,$$
where $c_{mj}\in\R$, $j=1,\ldots,m$, are constants such that $c_{mm}=(-1)^m\frac{2^m}{m!}$.
\end{enumerate}
\end{lemma}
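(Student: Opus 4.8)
The plan is to prove (i) and (ii) by a single induction on the order $m$, first settling the vanishing statement (i) and then feeding it into (ii). Throughout I would exploit the product structure \eqref{eq:chi-omega2} of $\chi_n$ and the solvability of the linearised system provided by Lemma~\ref{lemma:wm=0}.

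For (i), I would argue by induction on $m$ that $w^{(m)}=0$ for $0\le m\le M+1$. The base cases $m=0,1$ are the already-noted identities $w^{(0)}=w^{(1)}=0$. For the step, fix $m$ with $2\le m\le M+1$ and suppose $w^{(m')}=0$ for all $m'<m$; in particular $w_0^{(m-1)}=0$. Lemma~\ref{lem:obs1} (the chain (i)$\Rightarrow$(iii)$\Rightarrow$(iv)) then gives $w^{(m)}=K_m v\,s^m$ together with $\chi_n^{(m)}=-K_m\frac{n(n+1)(n+2)}{6}s^m$ for some $K_m\in\R$. On the other hand, \eqref{eq:chi-omega2} writes $\chi_n=(h_n-2ns)\Gamma_n$, a product whose first factor has vanishing zeroth-order term (its lowest term is $-2ns$) and whose second factor vanishes to order two. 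Reading off the order-$m$ coefficient as a convolution $\sum_{a+b=m}(h_n-2ns)^{(a)}\Gamma_n^{(b)}$ and using the hypothesis $\Gamma_n^{(b)}=0$ for $b\le M$, every summand vanishes: a nonzero $\Gamma_n^{(b)}$ forces $b\ge M+1$, hence $a=m-b\le 0$, which kills the first factor. Thus $\chi_n^{(m)}=0$, and since $\frac{n(n+1)(n+2)}{6}\neq 0$ I conclude $K_m=0$, i.e. $w^{(m)}=0$.

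For (ii), I would fix $2\le m\le M+1$; by (i) we have $w^{(m)}=0$, and $\Gamma_n^{(m-1)}=0$ since $m-1\le M$, so Lemma~\ref{lemma:wm=0} applies. First I would record that, by the same order-counting as above, all $\Gamma_n$-contributions to $c_0^{(m)},c_i^{(m)},c_n^{(m)}$ in \eqref{eq:c} vanish for $m\le M+1$, since in each the factor multiplying $\Gamma_n$ has positive order. Then I proceed by an inner induction on $m$ with the hypothesis that, for each $m'<m$, $h_\ell^{(m')}$ is a polynomial in $\ell$ of degree at most $m'$ with zero constant term and leading coefficient $c_{m'm'}=(-1)^{m'}\tfrac{2^{m'}}{m'!}$. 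Plugging this into \eqref{eq:c} shows that $c_i^{(m)}$ is, as a function of $i$, a polynomial of degree at most $m-2$: the term $(-nh_{i-1}+nh_{i+1}+2h_n)s$ contributes the first difference of $h^{(m-1)}$ (degree $\le m-2$), the quadratic-in-$h$ terms have strictly smaller degree, and $4ns^2$ only enters at $m=2$. The interior equations $w_i^{(m)}=0$ read $h_{i-1}^{(m)}-2h_i^{(m)}+h_{i+1}^{(m)}=\tfrac1n c_i^{(m)}$, so solving this discrete second-difference equation with $h_0^{(m)}=0$ (the content of Lemma~\ref{lemma:wm=0}, or a direct double summation using the descending-factorial identity \eqref{eq:useful2}) produces a polynomial in $\ell$ of degree at most $m$. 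The free parameter $h_1^{(m)}$, reflecting the non-uniqueness of the centre manifold, and the boundary data $c_0^{(m)},c_n^{(m)}$ add only terms linear in $\ell$, hence do not touch degrees $\ge 2$. This yields the stated form $h_\ell^{(m)}=\big(\sum_{j=1}^m c_{mj}\ell_{[j]}\big)s^m$. Finally, equating the leading $\ell^{m-2}$ coefficients in the second-difference equation, where $\Delta^2\ell^m$ has leading term $m(m-1)\ell^{m-2}$ and the leading coefficient of $c_i^{(m)}$ is $-2n(m-1)c_{m-1,m-1}$, gives the recursion $c_{mm}=-\tfrac2m c_{m-1,m-1}$; with the base value $c_{22}=2$ read off from $c_i^{(2)}=4n$ this solves to $c_{mm}=(-1)^m\tfrac{2^m}{m!}$.

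The hard part will be organising the second step so that both polynomiality and the exact leading coefficient emerge cleanly. Polynomiality is not automatic from the explicit inverse (the matrix $B$ of Lemma~\ref{lemma:wm=0}): one must verify that the boundary expressions $c_0^{(m)},c_n^{(m)}$ enter only linearly in $\ell$ and that the double summation of the genuinely interior $c_i^{(m)}$ stays inside the span of the $e_{[j]}$, which is precisely what \eqref{eq:useful2} guarantees. Tracking the leading coefficient requires care because the degree-$(m-2)$ part of $c_i^{(m)}$ comes solely from the first difference of the top-degree part of $h^{(m-1)}$, so one must be certain the quadratic and boundary terms are strictly lower order; this is where carrying the descending-factorial bookkeeping, rather than ordinary powers, keeps the computation honest and short.
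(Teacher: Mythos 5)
Your proof is correct. Part (i) is the paper's argument almost verbatim: induct on $m$, use Lemma~\ref{lem:obs1} to force $w^{(m)}=K_m v s^m$ with $\chi_n^{(m)}=-K_m\tfrac{n(n+1)(n+2)}{6}s^m$, and kill $K_m$ by reading the order-$m$ coefficient of the product $\chi_n=(h_n-2ns)\Gamma_n$ as a convolution in which every term vanishes. Part (ii), however, is organised genuinely differently. The paper applies Lemma~\ref{lemma:wm=0} in full: it writes out $c^{(m+1)}$ from \eqref{eq:c}, computes the explicit quantities $D_1$ and $D_2$ coming from the inverse matrix $B$, introduces the partial sums $g^{(q)}_i=\sum_{\ell=i+1}^n h_\ell^{(q)}$, and assembles everything into the vector recursion \eqref{eq:hvector}, from which both polynomiality and $c_{(m+1)(m+1)}=-\tfrac{2}{m+1}c_{mm}$ are read off. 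You instead treat the interior equations $n(h_{i-1}^{(m)}-2h_i^{(m)}+h_{i+1}^{(m)})=c_i^{(m)}$ as a discrete Poisson equation: polynomiality in $\ell$ (in the span of $e_{[1]},\dots,e_{[m]}$) follows from double summation via \eqref{eq:useful2} plus the observation that the homogeneous solutions and the boundary/free data only contribute terms of degree $\le 1$, and the recursion $c_{mm}=-\tfrac{2}{m}c_{m-1,m-1}$ drops out by matching top-degree coefficients, using that the degree-$(m-2)$ part of $c_i^{(m)}$ comes solely from the centred first difference $n(h_{i-1}^{(m-1)}-h_{i+1}^{(m-1)})s$ of the leading term of $h^{(m-1)}$. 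This buys a much shorter computation and makes transparent why only the leading coefficient matters for the eventual application in Theorem~\ref{thm:taylor}; the price is that you must justify (as you do) that the quadratic-in-$h$ convolution terms and the boundary contributions are of strictly lower degree, a point the paper's brute-force computation settles automatically. Both routes rest on the same two pillars — the rank-$(n-1)$ linear system behind Lemma~\ref{lemma:wm=0} and the closure of descending factorials under summation — and yield the same value $c_{mm}=(-1)^m\tfrac{2^m}{m!}$.
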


\begin{proof} (i) 
The proof is by induction on $m$. By construction $w^{(0)}=0$ for  $m=0$.  
Now assume that  $w^{(m)}=0$ for an index satisfying $0\leq m \leq M$. We show that the statement holds for $m+1$.
Since $w^{(m)}=0$, then from Lemma \ref{lem:obs1}, $w^{(m+1)}=K_{m+1} v s^{m+1}$ and $\chi_n^{(m+1)}=-K_{m+1}\frac{n(n+1)(n+2)}{6} s^{m+1}$. 
On the other hand, by assumption, $\Gamma_n^{(j)}=0$ for $0\le j\le M$, in particular for $0\leq j \leq m$, and combining this with equation \eqref{eq:chi-omega2} yields
$$\chi_{n}^{(m+1)}(s)=  \sum_{j=2}^{m-1}h_n^{(j)}(s)\Gamma_n^{(m+1-j)}(s)- 2ns \, \Gamma_n^{(m)}(s)=0.$$
Hence $K_{m+1}=0$, and thus $w^{(m+1)}=0$. This shows (i).

(ii) is proven by induction in $m$. Consider $m=2$. The form of $h_i^{(2)}$ follows from Lemma~\ref{lemma:wm=0}, using that $c_0^{(2)} = - 2n(n-1)s^2$, $c_i^{(2)} =4 ns^2$, $i\neq 0,n$ and $c_n^{(2)}= -  2 n(n-1) s^2$, see \eqref{eq:c}. Indeed,   $h_i^{(2)}$ equals $i h_1^{(2)}$ plus  the $(i-1)$-th component of the matrix product  $  B (\tfrac{c_3^{(2)}}{n},\dots,\tfrac{c_n^{(2)}}{n},c_0^{(2)})^{t}$, which is
\begin{multline*}
\tfrac{1}{n} c_{i+1}^{(2)} + \tfrac{2}{n}c_{i+2}^{(2)} + \dots + \tfrac{(n-i)}{n} c_n^{(2)} -  \tfrac{i}{n} c_0^{(2)} =
s^2 \big(2(n-1) i +  \sum_{k=1}^{n-i-1} 4k  -  2(n-1)(n-i) \big)=2 i (i-1) s^2.
\end{multline*}
In vector notation, this is 
 $h^{(2)}=h_1^{(2)} e_{[1]}+ 2 s^2 e_{[2]}$. Since $h_1^{(2)}$ is of the form $c_{21} s^2$ for some $c_{21}\in\R$ we have
$$h^{(2)}=\left(c_{21} e_{[1]}+c_{22}e_{[2]}\right)\!s^2,\quad c_{22}=2,$$
and the claim is true for $m=2$. 

Now assume the claim is true for all $m'$  such that $2\le m' \le m\le M$ and consider $m'=m+1$. 
We start by describing $h^{(m+1)}$. Since $w^{(m+1)}=0$ by (i) and $\Gamma_n^{(m)}=0$, we might apply Lemma~\ref{lemma:wm=0}. Since $\Gamma_n^{(m')}=0$ for $0\le m'\le m$ by assumption, and  $m\ge 2$, it holds that 
\begin{align*}
c_0^{(m+1)}  &=  -\Big(  (2-n)h_{n}^{(m)} + n h_{1}^{(m)}\Big) s + \sum_{j=2}^{m-1}h_1^{(j)}h_n^{(m+1-j)}, \\
c_i^{(m+1)} &=   -  \Big( -nh_{i-1}^{(m)}+n h_{i+1}^{(m)}+2h_{n}^{(m)}\Big)s -   \sum_{j=2}^{m-1}\big(h_{i}^{(j)}-h_{i+1}^{(j)}\big) h_n^{(m+1-j)} ,\qquad i=1,\dots,n-1,\\ 
c_n^{(m+1)} &=    n \Big(2h_{n}^{(m)}+h_{n-1}^{(m)}\Big)s  -    \sum_{j=2}^{m-1}h_n^{(j)}h_n^{(m+1-j)}.
\end{align*}
Then, by Lemma~\ref{lemma:wm=0}, $h_i^{(m+1)}$ equals $i h_1^{(m+1)}$ plus  the $(i-1)$-th component of the matrix product 
$  B \big(\tfrac{c_3^{(m+1)}}{n},\dots,\tfrac{c_n^{(m+1)}}{n},c_0^{(m+1)}\big)^{t}$. That is, for $i=2,\dots,n$, 
$$
h_i^{(m+1)} = i h_1^{(m+1)}+ \tfrac{1}{n} c_{i+1}^{(m+1)} + \tfrac{2}{n}c_{i+2}^{(m+1)} + \dots + \tfrac{(n-i)}{n} c_n^{(m+1)} -  \tfrac{i}{n} c_0^{(m+1)} = i h_1^{(m+1)}+ D_1 s + D_2,$$
where
\begin{align*}
D_1 &= \left(\sum_{j=i+1}^{n-1} (j-i) \big(  h_{j-1}^{(m)} -  h_{j+1}^{(m)}- \tfrac{2}{n} h_{n}^{(m)}\big) \right)+ (n-i) \big(2h_{n}^{(m)}+h_{n-1}^{(m)}\big)  +  \tfrac{i}{n} \big(  (2-n)h_{n}^{(m)} + n h_{1}^{(m)}\big)  \\
&= i h_{1}^{(m)} + h_n^{(m)} \Big(\tfrac{-2}{n}\Big( \sum_{j=i+1}^{n-1} (j-i)\Big)+ 2(n-i)   +  \tfrac{i}{n}  (2-n)  \Big) + (n-i) h_{n-1}^{(m)}+  \sum_{j=i+1}^{n-1} (j-i) \big(  h_{j-1}^{(m)} -  h_{j+1}^{(m)}\big) \\
&=  i h_{1}^{(m)} + h_n^{(m)}  \Big(2+\tfrac{i(1-i)}{n}\Big) + h_i^{(m)} + 2 \sum_{\ell=i+1}^{n-1} h_\ell^{(m)}
 = i h_{1}^{(m)} + \tfrac{i(1-i)}{n} h_n^{(m)}   + h_i^{(m)} + 2 \sum_{\ell=i+1}^{n} h_\ell^{(m)},
\end{align*}
and 
\begin{align*}
D_2 &= - \sum_{\ell=i+1}^{n-1} \tfrac{\ell-i}{n} \left( \sum_{j=2}^{m-1}(h_{\ell}^{(j)}-h_{\ell+1}^{(j)}) h_n^{(m+1-j)} \right)  - \tfrac{(n-i)}{n}    \sum_{j=2}^{m-1}h_n^{(j)}h_n^{(m+1-j)} - \tfrac{i}{n} \left( \sum_{j=2}^{m-1}h_1^{(j)}h_n^{(m+1-j)}\right) \\
&=-  \sum_{j=2}^{m-1}\left( \left( \sum_{\ell=i+1}^{n-1} \tfrac{\ell-i}{n}  (h_{\ell}^{(j)}-h_{\ell+1}^{(j)}) \right) +\tfrac{(n-i)}{n}   h_n^{(j)}  + \tfrac{i}{n}  h_1^{(j)}\right)h_n^{(m+1-j)}\\ &  =  -\tfrac{1}{n} \sum_{j=2}^{m-1} \Big( \sum_{\ell=i+1}^{n} h_{\ell}^{(j)}   +i h_1^{(j)}\Big)h_n^{(m+1-j)}.
\end{align*}

Let $g^{(q)}$ be defined component-wise as
$g_i^{(q)}  =  \sum_{\ell=i+1}^{n} h_{\ell}^{(q)} $. Then, the above result might be given in  vector notation as:
\begin{align}
h^{(m+1)}&=h_1^{(m+1)}e_{[1]}+\left(h_1^{(m)}e_{[1]} -\tfrac{1}{n}h_n^{(m)}e_{[2]} +h^{(m)}+2g^{(m)}\right)\!s  \nonumber
\\ & \qquad -\tfrac{1}{n}\sum_{j=2}^{m-1} \left(g^{(j)} + h_1^{(j)}e_{[1]}\right) h_n^{(m+1-j)}. \label{eq:hvector}
\end{align}
Using $\sum_{\ell=0}^{n} h_\ell^{(q)}=0$ \eqref{eq:h}, equation \eqref{eq:useful2},
and  the induction hypothesis, we note that 
\begin{align*}
g_i^{(q)} &= - \sum_{\ell=0}^i h_\ell ^{(q)}=-  \sum_{\ell=0}^i \left(\sum_{j=1}^{q} c_{qj} \ell_{[j]}\right) \!s^{q}= - \left( \sum_{j=1}^q  c_{qj}\sum_{\ell=0}^i \ell_{[j]}\right) \!s^q \\
&=- \sum_{j=1}^q  c_{qj} \left(  \tfrac{i_{[j+1]}}{j+1}+i_{[j]} \right) \!s^q=- \sum_{j=1}^{q+1}   \left(  \tfrac{c_{q(j-1)}}{j}+c_{qj} \right) \! i_{[j]}s^q, \qquad 2\leq q\leq m,
\end{align*}
where $c_{q0}=c_{q(q+1)}=0$. Consequently, 
$$g^{(q)}=- \sum_{j=1}^{q+1}   \left(  \frac{c_{q(j-1)}}{j}+c_{qj} \right) \! e_{[j]}s^q,\qquad 2\leq q\leq m.$$
Note that $h_j^{(q)}(s)= z_{qj}s^q$ for some $z_{qj}\in \R$. 
Using this expression and the induction hypothesis on \eqref{eq:hvector}, we obtain
\begin{align*}
h^{(m+1)}&=
z_{(m+1)1} s^{m+1} e_{[1]}+\left(z_{m1}s^{m} e_{[1]} -\tfrac{z_{mn} }{n} s^m e_{[2]} +  \sum_{j=1}^m c_{mj} e_{[j]} \, s^m - 2\sum_{j=1}^{m+1}   \left(  \frac{c_{m(j-1)}}{j}+c_{mj} \right) \! e_{[j]}s^m 
\right)\!s \\
&\quad -\tfrac{1}{n}\sum_{j=2}^{m-1} \left( \sum_{k=1}^{j+1} -  \left(  \frac{c_{j(k-1)}}{j}+c_{jk} \right) \! e_{[k]}s^j    + z_{j1}s^je_{[1]}\right) z_{(m+1-j)n} s^{m+1-j}.
\end{align*}
 This expression shows that $h^{(m)}$  takes the form stated in the lemma. The coefficient of $s^{m+1} e_{[m+1]}$ is, by the induction hypothesis, 
$$c_{(m+1)(m+1)}=-2\left( \frac{c_{mm}}{m+1}+c_{m,m+1} \right)=\frac{-2 c_{mm}}{m+1}=\frac{(-1)^{m+1}2^{m+1}}{(m+1)!},$$
as required. 
\end{proof}

\begin{lemma}\label{lemma:gamma}
$\Gamma_n^{(m)}=0$ for all $m=0,\dots,n-1$ if $n$ even and for all $m=0,\dots,n-2$ if $n$ odd.
\end{lemma}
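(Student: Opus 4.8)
The plan is to proceed by induction on $m$, bootstrapping the structural description of the Taylor coefficients $h^{(m)}$ supplied by Lemma~\ref{lem:obs3} against the orthogonality relations of Lemma~\ref{lem:useful}. The base cases $m=0$ and $m=1$ are immediate, since the zeroth and first order terms of $\Gamma_n$ vanish by the remark following \eqref{eq:gamma2}; these already settle the whole statement when $n=2$ or $n=3$.

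For the inductive step, fix $m\ge 2$ with $m\le n-1$ (for $n$ even) or $m\le n-2$ (for $n$ odd), and assume $\Gamma_n^{(m')}=0$ for all $0\le m'\le m-1$. Applying Lemma~\ref{lem:obs3} with $M=m-1$, whose hypothesis is exactly this inductive assumption, yields for the top index the expansion $h^{(m)}(s)=\bigl(\sum_{j=1}^m c_{mj}\,e_{[j]}\bigr)s^m$ with real constants $c_{mj}$. Since $\Gamma_n=\gamma_n\cdot h$ depends linearly on $h$, its $m$-th order term is $\Gamma_n^{(m)}=\gamma_n\cdot h^{(m)}=\bigl(\sum_{j=1}^m c_{mj}\,(\gamma_n\cdot e_{[j]})\bigr)s^m$. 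In this sum the index $j$ runs only up to $m$, so the range restriction forces $j\le m\le n-1$ when $n$ is even and $j\le m\le n-2$ when $n$ is odd; in either case Lemma~\ref{lem:useful} gives $\gamma_n\cdot e_{[j]}=0$ for every such $j$, whence $\Gamma_n^{(m)}=0$ and the induction closes.

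The computation itself is short once the two preceding lemmas are available, so the content lies entirely in the interlocking structure. The delicate point — and precisely what dictates the parity-dependent cutoff in the statement — is the bootstrapping: the expansion of $h^{(m)}$ in terms of the vectors $e_{[j]}$ is only legitimate because $\Gamma_n$ has already been shown to vanish up to order $m-1$, and the induction can be driven forward exactly as long as $\gamma_n$ remains orthogonal to the $e_{[j]}$, that is, up to order $n-1$ for $n$ even and order $n-2$ for $n$ odd. I expect the main care to be in keeping the index ranges in Lemma~\ref{lem:obs3}(ii) and Lemma~\ref{lem:useful} aligned, rather than in any substantive new estimate.
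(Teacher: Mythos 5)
Your proof is correct and follows essentially the same route as the paper's: an induction on $m$ in which the inductive hypothesis licenses Lemma~\ref{lem:obs3}(ii), so that $h^{(m)}$ lies in the span of $e_{[1]},\dots,e_{[m]}$, and the orthogonality relations of Lemma~\ref{lem:useful} then kill $\gamma_n\cdot h^{(m)}$. The only difference is a cosmetic shift of the induction index.
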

\begin{proof}
Let $M=n-1$ if $n$ is even and $M=n-2$ if $n$ is odd. The proof is by induction in $m$. By construction $\Gamma_n^{(0)}=0$. 
Assume  $\Gamma_n^{(m')}=0$ for all $0\leq m' \leq m<M$, and consider $m+1$.  By Lemma~\ref{lem:obs3}(ii) and the induction hypothesis,  the vector of coefficients of  $s^{m+1}$ in $h^{(m+1)}$  lives in the vector space spanned by the vectors $e_{[1]},\dots,e_{[m+1]}$. 
Now,  using \eqref{eq:gamma2} and Lemma~\ref{lem:useful}, we have
$$\Gamma_n^{(m+1)}=\gamma_n\cdot h^{(m+1)}=0,$$
since $m+1\leq M$.
\end{proof}

We are now in a situation where we can prove  Theorem~\ref{thm:taylor}.  

\begin{proof}[Proof of Theorem \ref{thm:taylor}] Let  $M=n+1$ if $n$ is even and $M=n$ if $n$ is odd.
By Lemma~\ref{lemma:gamma}, $\Gamma_n^{(m)}=0$ for all $0\leq m \leq M-2$ and hence by Lemma~\ref{lem:obs3}(i), $w^{(m)}=0$ (and in particular $w_0^{(m)}=0$), for all $1\leq m \leq M-1$.

By Lemma~\ref{lem:obs1}, we have $w_0^{(M)} = K_{M} n  s^{M}$ and 
$$\chi_n^{(M)}(s)=-K_{M}\frac{n(n+1)(n+2)}{6} s^{M}.$$
Furthermore, by Lemma~\ref{lem:obs3}(ii),  the vector of coefficients of  $s^{M-1}$ in $h^{(M-1)}$  
lives in the vector space spanned by the vectors $e_{[1]},\dots,e_{[M-1]}$ and the coefficient of $e_{[M-1]}$ is $c_{M-1,M-1}$. 
 Using   \eqref{eq:gamma2} and Lemma~\ref{lem:useful}, we have
$$\Gamma_n^{(M-1)}(s)=\gamma_n\cdot h^{(M-1)}(s)=c_{M-1,M-1}\, (\gamma_n\cdot e_{[M-1]})  s^{M-1}.$$
By  \eqref{eq:chi-omega2}, we have
$$
 \chi_{n}^{(M)}(s)=   \sum_{j=2}^{M-1}h_n^{(j)}(s)\Gamma_n^{(M-j)}(s)   -  2n s  \, \Gamma_n^{(M-1)}(s) = - 2 n\,  c_{M-1,M-1}\, (\gamma_n\cdot e_{[M-1]})  s^{M},
$$
where we use that the first summand vanishes. This implies that
$$ -K_{M}\frac{n(n+1)(n+2)}{6} =  - 2 n\,  c_{M-1,M-1}\, (\gamma_n\cdot e_{[M-1]}),  $$
and hence  the first non-zero coefficient of the Taylor expansion of $w_0$ is $n$ times
$$K_{M} =   \frac{12 c_{M-1,M-1}\, (\gamma_n\cdot e_{[M-1]})} {(n+1)(n+2)}.   $$
If $n$ is even, $M-1=n$, $(-1)^n=1$ , and we have by  Lemma~\ref{lem:useful} and  Lemma~\ref{lem:obs3}(ii) that,
$$ K_{n+1} =  -  \frac{12 \cdot  2^n\,   n!} {n!(n+1)^2(n+2)}=  -  \frac{3\cdot  2^{n+2}  } {(n+1)^2(n+2)}. $$
Similarly, if $n$ is odd, then $M-1=n-1$ and
$$ K_n =-  \frac{12\cdot  2^{n-1} \cdot 2\cdot (n-1)! } {(n-1)! (n+1)^2(n+2)}=-  \frac{3\cdot 2^{n+2}}{(n+1)^2(n+2)}. $$
The statement now follows from $w_0^{(M)} = K_{M} n  s^{M}$.
This concludes the proof of Theorem~\ref{thm:taylor}.
\end{proof}

These computations are supported by computations in Maple for up to $n=15$.

\subsection{Concluding the argument}
\label{sec:conclude}

We have shown that the MM system admits a steady state of multiplicity $2\lfloor\tfrac{n}{2}\rfloor+1$ with $n-1$ negative eigenvalues and such that the first non-zero term of the Taylor expansion of $w_0(s)$ is negative. For brevity, as in the previous section, we denote this multiplicity by $M$. Next we will perturb the system so that the multiple steady state splits into $M$ steady states of multiplicity one. 

To do this note first that given any polynomial of degree $n+1$ whose leading term is $1$ and whose constant term is $-1$ it is possible  to choose the coefficients $\alpha_i$ and $\beta_i$ in (\ref{eq:thepoly}) so as to reproduce the given polynomial. Moreover this can be done in such a way that if the coefficients in the polynomial are varied the coefficients  $\alpha_i$ and $\beta_i$ depend smoothly on the coefficients of the polynomial. (In what follows the term 'smooth'
is used to mean $\mathcal{C}^\infty$.)
Next we introduce a specific family of polynomials depending on a parameter
$\mu$ by replacing the function $(u-1)^{2k+1}$ used previously by 
$$\left(u-1\right)\prod_{\ell=1}^k\big(u-(1+\ell\mu)\big)\big(u-(1+\ell\mu)^{-1}\big).$$
This works for $n$ even. We construct a similar family of polynomials if $n$ is odd.
 We can then choose  coefficients $\alpha_i(\mu)$ and $\beta_i(\mu)$  depending smoothly on $\mu$
to reproduce this family. After that we can choose parameters of the MM system 
depending smoothly on $\mu$ so as to give rise to these coefficients 
$\alpha_i$ and $\beta_i$. In the end we have a family of MM systems
depending smoothly on the parameter $\mu$. 

For $\mu=0$, we have the multiple  steady state which has been studied before. For convenience it will be referred to as the bifurcation point. As $\mu$ is varied the root of the polynomial  at $u=1$, which corresponds to the bifurcation point, splits into simple  roots. As is common in bifurcation theory we introduce a suspended system by adjoining the equation $\mu'=0$ to the given evolution equations. This increases the dimension of the system by one. The bifurcation point is also a steady state of the suspended system. Its centre manifold as a solution of that system is of
dimension two and is foliated by invariant curves of constant $\mu$. The  invariant curve with $\mu=0$ is the centre manifold of the bifurcation  point with respect to the MM system studied previously.
 The invariant curve for a non-zero value of $\mu$ will be referred to as the \lq perturbed centre manifold\rq\
although it should be emphasised that it itself is not the centre manifold of anything. It is a general property of a centre manifold of any steady state that all other steady states sufficiently close to the original one lie on that centre manifold. Thus for $\mu$ small all steady states of the  
parameter-dependent MM  system for a fixed value of $\mu$ close to the bifurcation point lie on the perturbed  centre manifold.

The non-zero roots $u$ of the polynomial are simple. This suggests that the  corresponding steady states of the MM system might be  hyperbolic, but this is not obvious. It will now be proved that it is in fact  true. The restriction of the system to the two-dimensional centre manifold of the bifurcation point with respect to the suspended system can be thought of as a one-dimensional dynamical system depending on the  parameter $\mu$. Let us write it in the form $s'(\tau,\mu)=q(s,\mu)$, where $q(s,0)$ is the restriction of the vector field $f$ to the centre manifold parametrised with $s$, 
and $'$ denotes derivative with respect to $\tau$. The  function $q$ has one zero for $\mu=0$ and $M$ zeroes for $\mu\ne 0$. The aim  is to show that $q'$ is non-vanishing at each zero of $q$ for $\mu\ne 0$ and  $\mu$ sufficiently small. This will be proved by contradiction. 

If the statement is false, then there is a sequence $\mu_j$ such that $\mu_j\to 0$ as $j\to\infty$, and such 
that the function $q(s,\mu_j)$ has at least one degenerate root. 
In particular $q(s,\mu_j)$ has at least $M+1$ roots counted with multiplicity. 
Then $q'$ has at least $M$ (distinct) roots, all of which must lie between the smallest and the largest positive roots of $q$.
Continuing in this fashion, it can be concluded that $q^{(M)}(s,\mu_j)$ has at least one root, which is between the smallest and the largest root of $q(s,\mu_j)$. Hence by continuity it follows that $q^{(M)}(0,0)=0$. This
contradicts Theorem \ref{thm:taylor}, saying that $q^{(M)}(0,0)\not=0$,  and so in reality $q'$ does not vanish at any zero of $q$  for $\mu\ne 0$. Note that it does not matter whether we argue about the vector field $w$ or $f$ in Theorem \ref{thm:taylor}. Since $w=rf$ with $r(s)$ an arbitrarily often differentiable positive function with $r(0)=1$, then $f(s)\approx -a s^M$  for some $a>0$ near $s=0$.

It follows from this argument that for any $\mu\ne 0$ the steady states of  the restriction of the dynamical system to the perturbed centre manifold are hyperbolic. Since the other eigenvalues of the linearisation at the bifurcation are negative, it follows by continuity that at nearby points with $\mu\ne 0$, all  eigenvalues have non-zero real parts. More information can be obtained by  considering the sign of the function $q'$ at those points where it is not zero. At each steady state the sign changes as no zeros are degenerate. The  sign can be determined by continuity since in Theorem \ref{thm:taylor} we have established the sign in the case $\mu=0$. Namely, since $M$ is odd and $q^{(M)}(0,0)<0$, locally around zero, $q(s,0)$ is positive for $s<0$ and negative for $s>0$. 
 It follows that within the perturbed centre manifold sinks and sources alternate and the outermost steady states are sinks. Hence if these points, considered as steady states of the full MM system, are   ordered in a suitable way, sinks alternate with saddle points whose unstable manifolds are one-dimensional and the outermost ones are sinks. This completes the proof of Theorem \ref{thm:MM}.

\color{black}

\section{Discussion}

Perhaps it is of relevance to point out differences and similarities to previous work. The proof of Theorem \ref{thm:full} is substantially different from the proof given in \cite{hell15a} for the case $n=2$. Generalising the proof of \cite{hell15a} seems non-trivial and impracticable already for $n=4$, the first case that would have given results establishing the existence of more than the two stable steady states known for $n=2$.

The present proof revolves around several key points. First of all, a reduction of the system using time scale separation  that produces a one-dimensional centre manifold is performed. A similar procedure has been applied to investigate the stability of steady states in a dynamical system modelling a different biological situation, the Calvin cycle   \cite{calvin2017}. Secondly, we obtain the desired number of steady states from a single bifurcation point. Since the centre manifold is one-dimensional, this allows us to conclude the stability of the steady states as we unfold the bifurcation point by parameter perturbation. 

It is natural to wonder to what extent these techniques could be developed into a general procedure or statement, as unlimited multistationarity has been established for other families of systems \cite{feliu:unlimited,feng:allosteric}. For the specific reduction of the system to be applicable, we need the transverse  eigenvalues to have negative real parts. This is true not only for this case but in general for reduction of systems by (so-called) removal of non-interacting species \cite{feliu:intermediates,Fel_elim,saez_reduction,walcher19}. So this part of the proof could potentially be generalised. However, it seems non-trivial to devise techniques to guarantee and analyse a one-dimensional centre manifold for which a bifurcation point exists.

As mentioned in the introduction, the steady states whose stability has been
investigated here are not the most general steady states of the multiple
futile cycle, in the sense that there might exist more than $2\lfloor \tfrac{n}{2}\rfloor +1$ for some parameter choices. It was proven in \cite{{Wang:2008dc}} that $2n-1$ is an upper bound of the number of positive steady states.
It was conjectured in \cite{flockerzi14} that this bound is sharp and the
conjecture was proved in the case $n=3$. For $n\ge 3$ the Michaelis-Menten
system does not have so many steady states and so a reduction to that system
cannot be used to prove the conjecture. Perhaps there exists some other
rescaling leading to a different limiting system that admits $2n-1$  steady states, which can  be lifted to the multiple futile cycle. If this were true, then perhaps the stability 
analysis of the present paper could also be extended to that case.

The strategy of establishing the occurrence of dynamical features of a reaction network by using a reduction in the sense of GSPT is not limited to the case of steady states and their stability. It has been used to prove the existence of periodic solutions of the MAPK cascade, a phosphorylation system  more complicated than the multiple futile cycle \cite{hell16}. In the set-up of GSPT explained above, and under the assumption that there are no purely imaginary transverse eigenvalues, it is sometimes possible to show that the existence of a periodic solution of the limiting system implies the  existence of a periodic solution of the original system. A sufficient  condition for this is that the periodic solution of the limiting  system is hyperbolic, that is,  no eigenvalue of the Poincar\'e map has modulus one. This strategy has also been suggested in the context of reaction networks \cite{banajiOscillation}. If in addition that solution is stable (which in this case means that the modulus of each eigenvalue of the Poincar\'e mapping is less than one), then the solution of the original system is also stable.   In \cite{hell16} the strategy could not be employed since hyperbolicity could not be proved. Instead an alternative strategy was used which might be more widely applicable. The existence of periodic solutions is often proved by showing that there is a Hopf bifurcation. It turns out that a Hopf bifurcation in the limiting system implies the presence of a Hopf  bifurcation in the original system and hence the existence of  periodic solutions of the original system, without any hyperbolicity condition being necessary.

\section{Proofs of auxiliary lemmas}

\subsection{Proof of Lemma~\ref{lemma:Pmatrix}}\label{sec:Pmatrix}
Recall the statement of the lemma: If $M=(m_{ij})$ is a square matrix such that $m_{ij}=a_{ij}+b_i$ where $a_{ij}=0$ for all $i\ne j$ and all $a_{ii}$ and $b_i$ are positive, then all eigenvalues of $M$ have positive real part.

To prove this result, consider, for $n\geq 1$,  the following $n\times n$ matrix:
$$ A_n = \begin{pmatrix}
a_1 + b_1 & b_1 & \dots & b_1 \\ 
b_2 & a_2+b_2 &\dots & b_2 \\
\vdots & \vdots & \ddots &  \vdots \\
b_n & b_n &  \dots & a_n+b_n
\end{pmatrix},$$
where $a_i,b_i>0$. We want to show that all eigenvalues of $A_n$ have positive real part.  Given a matrix $B\in \R^{n\times n}$ and two sets $I,J\subseteq\{ 1,\dots,n\}$ of cardinality $r\leq n$, we denote by  $B_{I,J}\in \R^{r\times r}$ the submatrix of $B$ consisting of the rows indexed by $I$ and the columns indexed by $J$. Then
\begin{itemize}
\item
A matrix $B\in \R^{n\times n}$ is a P-matrix if  all principal minors, $\det(B_{I,I})$ for $I\subseteq\{ 1,\dots,n\}$ are positive. 
\item A matrix $B\in \R^{n\times n}$ is sign-symmetric if for every pair sets $I,J\subseteq\{ 1,\dots,n\}$ of cardinality $r\leq n$,
it holds that
$$\sign(\det(B_{I,J}))\sign(\det(B_{J,I}))\geq 0.$$
\end{itemize}

It follows from \cite{Carlson:1974uw}, that if $B$ is a P-matrix and sign-symmetric, then all eigenvalues of $B$ have positive real part.  Therefore, it is enough to show that $A_n$ is a $P$-matrix and sign-symmetric.  

We first show that $A_n$ is a P-matrix for all $n\geq 1$.
 Because non-maximal principal minors of $A_n$ are of the same form as $A_n$ but of smaller size, it is sufficient to prove that $\det(A_n)>0$ for all $n\ge 1$.  We first subtract the last column of $A_n$ from all other columns and obtain a new matrix $A'_n$:
$$ A'_n = \begin{pmatrix}
a_1  & 0 & 0 & \dots & b_1 \\ 
0 & a_2 & 0 & \dots & b_2 \\
\vdots & \vdots  & \ddots & \vdots &  \vdots \\
0 & 0 & \dots & a_{n-1} & b_{n-1} \\
-a_n & -a_n &  -a_n &\dots & a_n+b_n
\end{pmatrix}.$$
Let $r_i$ denote the $i$-th row of $A_n'$. We replace the last row of $A'_n$,  $r_n$, by the linear combination 
$$ (a_n/a_1)r_1 + (a_n/a_2)r_2 + \dots +(a_n/a_{n-1})r_{n-1}+r_n$$
and obtain the matrix:
$$ A''_n = \begin{pmatrix}
a_1  & 0 & 0 & \dots & b_1 \\ 
0 & a_2 & 0 & \dots & b_2 \\
\vdots & \vdots  & \ddots & \vdots &  \vdots \\
0 & 0 & \dots & a_{n-1} & b_{n-1} \\
0 & 0 &  0 &\dots & a_n + \sum_{i=1}^{n} (a_nb_i/a_i)
\end{pmatrix}.$$
The matrix $ A''_n$ is upper-triangular matrix and the diagonal entries are positive. Therefore, the determinant of $A''_n$ is  positive, and as a consequence $\det(A_n)$ is also positive. This shows that $A_n$ is a P-matrix.

To see that $A_n$  is sign-symmetric  for all $n\geq 1$, we will show that
\begin{equation}\label{eq:signs}
\sign(\det(A_n)_{I,J})= \sign(\det(A_n)_{J,I}),
\end{equation}
for all $I,J\subseteq\{ 1,\dots,n\}$ of cardinality $r$. Clearly, we only need to check the case $I\neq J$.

If $I\cap J$ contains strictly less than $r-1$ elements, then 
$\det((A_n)_{I,J}) = \det((A_n)_{J,I})=0$. Indeed, assume there are two indices $i,j\in I$ that are not in $J$. Then $(A_n)_{I,J}$ has  rows $(b_i\quad b_i\quad \cdots\quad b_i)$ and $(b_j\quad b_j\quad \cdots \quad b_j)$ and hence the determinant is zero. 
By choosing two indices $i,j\in J$ that are not in $I$ we argue that $\det((A_n)_{J,I})=0$ as well.

Therefore, we only need to check \eqref{eq:signs} when $I,J$ differ in only one index.
Let $I\cap J=\{ \ell_1,\dots,\ell_{r-1}\}$ with $\ell_1< \ldots< \ell_{r-1}$ and $i,j,s,k$ such that 
\begin{align*}
I & =  \{ \ell_1,\dots,\ell_{k}, i , \ell_{k+1},\dots,\ell_{r-1}\}, \\
J & = \{ \ell_1,\dots,\ell_{s}, j , \ell_{s+1},\dots,\ell_{r-1}\},
\end{align*}
where $\ell_{k}< i < \ell_{k+1}$ and $\ell_{s}< j < \ell_{s+1}$.
The matrix $ (A_n)_{I,J} $ might be constructed by taking a matrix of type $A_{r-1}$, built from the data $a_{\ell_1},\dots,
a_{\ell_{r-1}}$ and $b_{\ell_{1}},\dots,b_{\ell_{r-1}}$, and adding a row $(b_i \ \dots \ b_i)$ after the $k$-th row and a column with entries  $b_{\ell_1},\dots,b_{\ell_{k}},b_i,b_{\ell_{k+1}},\dots,b_{\ell_{r-1}}$ after the $s$-th column.
By applying the permutation that sends the $k$-th row to the first row and the $s$-th column to the first column, we conclude that 
$\det(A_n)_{I,J} $ agrees with $(-1)^{k+s}$ times the determinant of a matrix of the form 
$$ B_r = \begin{pmatrix}
\beta_1 & \beta_1 & \dots & \beta_1 \\ 
\beta_2 & \alpha_2+\beta_2  & \dots & \beta_2 \\
\vdots & \vdots &  \ddots & \vdots\\
\beta_{r} &\beta_{r} &  \dots & \alpha_{r}+\beta_{r}  
\end{pmatrix}$$
with $(\beta_1,\dots, \beta_r)=(b_i,b_{\ell_1},\dots,b_{\ell_{r-1}})$ and $(\alpha_2,\dots,\alpha_{r-1})=  (a_{\ell_1},\dots,a_{\ell_{r-1}})$.

Similarly, the matrix $ (A_n)_{J,I} $ might  be constructed by taking the matrix  $A_{r-1}$ as above, and adding a row $(b_j \ \dots \ b_j)$ after the $s$-th row and a column with entries  $b_{\ell_1},\dots,b_{\ell_{s}},b_j,b_{\ell_{s+1}},\dots,b_{\ell_{r-1}}$ after the $k$-th column.
Therefore, 
$\det(A_n)_{J,I} $ agrees with $(-1)^{k+s}$ times the determinant of a matrix of the form $B_r$ above, now with 
 $\beta_1=b_j$.

Therefore, it is enough to show that the sign of $\det(B_r)$ does not depend on the values of $\alpha_i>0,\beta_i>0$.
We proceed similarly to the argument given for the first statement. 
We subtract the first column of $B_r$ to all other columns and obtain a new matrix $B'_r$ equal to:
$$ B_r' = \begin{pmatrix}
\beta_1 & 0 & \dots & 0 \\ 
\beta_2 & \alpha_2   & \dots & 0 \\
\vdots & \vdots &  \ddots & \vdots\\
\beta_{r-1} & 0 &  \dots & \alpha_{r-1} 
\end{pmatrix}.$$
The determinant of $B_r'$ is $\beta_1\alpha_2\cdot \dots \cdot \alpha_{r-1}$. Therefore the sign of  $\det(B_r')$, and hence of  $\det(B_r)$, is  $+1$ and  is independent of $\alpha_i,\beta_i$, as desired.
This concludes the proof.

\medskip

\medskip
\noindent
\subsection{Proof of Lemma \ref{lem:gammaalpha}}\label{proof:gammaalpha}

\begin{proof}  
Recall that $\beta_i = \alpha_1 - \gamma_n(i) -\tfrac{1}{n+1}$ (Proposition \ref{prop:good}), that  $\gamma_n(n)=-\tfrac{1}{n+1}$ and that $\sum_{\ell=1}^n \ell  \gamma(\ell)=0$ (Lemma~\ref{lem:useful}). 
By definition of $v_\ell$ and Proposition~\ref{prop:good}, we have
\begin{align*}
 \sum_{\ell=1}^{n}  \beta_\ell v_\ell   &=  \sum_{\ell=1}^{n} \beta_\ell (n-2\ell) = 
n  \sum_{\ell=1}^{n}  \beta_\ell - 2   \sum_{\ell=1}^{n}  \ell \, \alpha_1 + 
 2   \sum_{\ell=1}^{n} \ell   \gamma_{n}(\ell)   + 
   \tfrac{2}{n+1} \sum_{\ell=1}^{n} \ell     \\ 
&= n( \alpha_1 n -1) -\alpha_1 n(n+1) + n  =-\alpha_1 n.
\end{align*}
Then by definition of $\Gamma_n(s)$ and using $\sum_{\ell=1}^n h_\ell =0$ \eqref{eq:h}, we have
\begin{align*}
 \sum_{\ell=1}^{n} \beta_\ell h_\ell(s) = (\alpha_1 - \tfrac{1}{n+1}) \sum_{\ell = 1}^{n} h_\ell(s) - \sum_{\ell=1}^{n} 
\gamma_{n}(\ell) h_\ell(s) =  - \Gamma_n(s).
\end{align*}
Using this information, Proposition~\ref{prop:good} and that $h_0=0$, we  compute the two  factors:
\begin{align*}
1+\sum_{\ell=1}^n \beta_{\ell} x_\ell(s) &= 
1+ \sum_{\ell=1}^n \beta_{\ell} + \sum_{\ell=1}^n  \beta_\ell v_\ell s + \sum_{\ell=1}^n  \beta_\ell h_\ell (s)
  =
 \alpha_1 n - \alpha_1n s   - \Gamma_n(s).  \\
1+\sum_{\ell=1}^{n} \alpha_{\ell} x_{\ell-1}(s) &= 
1+ \sum_{\ell=1}^{n} \alpha_{\ell} +\sum_{\ell=1}^{n} \alpha_{\ell} v_{\ell-1} s +\sum_{\ell=1}^{n} \alpha_{\ell}h_{\ell-1} (s)\\ & = \alpha_1 n +\sum_{\ell=1}^{n-1} \beta_{\ell} v_{\ell} s+ \alpha_1 v_0 s +\sum_{\ell=1}^{n-1} \beta_{\ell}h_{\ell} (s)+ \alpha_1 h_0(s) \\ & 
 = \alpha_1 n - \alpha_1 n s - (\alpha_1 (-n))s + \alpha_1 n s -\alpha_1h_n(s) - \Gamma_n(s)
\\ &  =\alpha_1 n + \alpha_1n s -  \alpha_1 h_n(s) - \Gamma_n(s).
\end{align*}
\end{proof}

\medskip
\noindent
\subsection{Proof of Equation \eqref{eq:chi-omega2}}\label{proof:chi} 
We will show that 
$$ \chi_{n}(s)=(h_n(s)- 2ns  ) \Gamma_n(s).$$
Consider  the expression of $w_i$ in \eqref{eq:w}. We find
\begin{align*}
\chi_n(s) & = \sum_{j=1}^n j w_j(s) = \left(\sum_{j=0}^{n-1} x_j(s)\right) \left(1+\sum_{\ell=0}^n \beta_{\ell} x_\ell(s) \right)  -  \left(\sum_{j=1}^{n} x_j(s)\right) \left(1+\sum_{\ell=0}^{n} \alpha_{\ell} x_{\ell-1}(s) \right).
\end{align*}
Using $e\cdot h = e\cdot v =0$ and $h_0=0$ yields
\begin{align*}
\sum_{j=0}^{n-1} x_j(s) &= \sum_{j=0}^{n-1} (1 + v_j s + h_j (s) )= n + ns - h_n(s), \\
\sum_{j=1}^{n} x_j(s) &= \sum_{j=1}^{n}  (1 + v_j s + h_j(s) ) = n -ns.
\end{align*}
Combined with Lemma~\ref{lem:gammaalpha}, this gives
\begin{align*}
\chi_n(s) & =\big(n + ns - h_n(s)\big)\big( \alpha_1 n - \alpha_1 n s   - \Gamma_n(s) \big) \\ & - \big(n -ns\big)\big( \alpha_1 n + \alpha_1 n s -  \alpha_1 h_n(s) - \Gamma_n(s)\big)  = (h_n(s)-2ns) \Gamma_n(s).
\end{align*}

\medskip
\noindent
\subsection{Proof of Lemma \ref{lem:useful}}\label{proof:useful}

\begin{proof}
Let $P_n(\ell)=n$ for $n$  even, and $P_n(\ell)=n-2\ell$ for $n$  odd. Then $\gamma_n(\ell)= \tfrac{(-1)^{\ell+1}  P_n(\ell)}{n(n+1)}\binom{n}{\ell}.$
This implies for $k\ge 0$,
\begin{align*}
\sum_{\ell=0}^{n}  \ell_{[k]} \gamma_{n}(\ell) &=
\sum_{\ell=k}^{n}  \ell_{[k]}  \frac{(-1)^{\ell+1}  P_n(\ell)}{n(n+1)}\binom{n}{\ell} = 
\tfrac{1}{n(n+1)}\sum_{\ell=k}^{n}  (-1)^{\ell+1}  P_n(\ell) \frac{\ell!}{(\ell-k)!}    \frac{n!}{\ell! (n-\ell)!} 
\\ & =\frac{(-1)^{k+1} n!}{n(n+1)(n-k)!}\sum_{j=0}^{n-k}  (-1)^{j}  P_n(j+k)  \binom{n-k}{j}.
\end{align*}
Since $P_n(j+k)$ has degree $0$ in $j$ for $n$ even and degree $1$ for $n$ odd, the identity \eqref{eq:binom0} implies 
$\sum_{\ell=0}^{n}  \ell_{[k]} \gamma_{n}(\ell)=0,$
for $n$ even and $n-k>0$, that is, $k<n$, and for $n$ odd and $n-k>1$, that is, $k<n-1$. This shows the first part of the statement. 

For $n$ even and $k=n$, the above reduces to
$$\sum_{\ell=0}^{n}  \ell_{[n]} \gamma_{n}(\ell)  =\frac{(-1)^{n+1} n!}{n(n+1)\cdot 0!} (-1)^{0}  P_n(n)  \binom{0}{0} = \tfrac{- n!}{n+1},$$
as in the statement.
For $n$ odd and $k=n-1$, it reduces to
$$\sum_{\ell=0}^{n}  \ell_{[n-1]} \gamma_{n}(\ell)  = \frac{(-1)^{n} n!}{n(n+1)\cdot 1!}\Big(
  P_n(n-1)  \binom{1}{0} - P_n(n)  \binom{1}{1}\Big)
=  \frac{-(n-1)!}{(n+1)}\left( (2-n)  - (-n)\right)=  \frac{-2(n-1)!}{(n+1)}.
 $$
This concludes the proof of the lemma.
\end{proof}
%

\medskip
\paragraph{Acknowledgements. } EF and CW have been supported by the Independent Research Fund of Denmark. ADR is grateful for the hospitality of the 
Department of Mathematical Sciences, University of Copenhagen, while an important part of this work was being done.

\end{document}